\newtheorem{theorem}{Theorem}
\newtheorem{lemma}{Lemma}
\newtheorem{proposition}{Proposition}
\newtheorem{definition}{Definition}
\newtheorem{remark}{Remark}
\newtheorem{hypothesis}{Hypothesis}
\begin{document}

\title{Information upper bounds in composite quantum systems}

\author{Zhaoyang Dong}
\affiliation{%
    College of Intelligence and Computing, Tianjin University, No.135, Ya Guan Road, Tianjin 300350, China
}%
\author{Yuexian Hou}%
 \email{yxhou@tju.edu.cn}
 \affiliation{%
    College of Intelligence and Computing, Tianjin University, No.135, Ya Guan Road, Tianjin 300350, China
}%
\author{Chenguang Zhang}
\affiliation{%
    College of Intelligence and Computing, Tianjin University, No.135, Ya Guan Road, Tianjin 300350, China
}%
\affiliation{%
    School of Science, Hainan University, No.58, Renmin Avenue, Haikou 570228, China
}%
\author{Yingjie Gao}
\affiliation{%
    College of Intelligence and Computing, Tianjin University, No.135, Ya Guan Road, Tianjin 300350, China
}%
\author{Dawei Song}
\affiliation{%
    School of Computer Science and Technology, Beijing Institute of Technology, Beijing, 100081, China
}%
\affiliation{
    School of Computing and Communications, The Open University, Milton Keynes, United Kingdom
}
\maketitle

\begin{abstract}
 The intrinsic information of quantum systems refers to the information required to define a quantum state, and may reveal how the nature stores and processes microscopic information. However, there is an evident paradox due to the "\textit{information scale contrast}": Existing analytical results on the information bounds of quantum systems show that the information-carrying capacity of an $n$-qubit system is only $O(n)$. Nonetheless, the intrinsic information content (estimated by, e.g., the number of parameters or the complexity of ontic embedding~\cite{pusey2012reality}) indicates that defining an $n$-qubit system often at least requires information of the order $O(2^n)$. In this paper, we aim to clarify the upper bound of intrinsic information in quantum systems, as well as explain and resolve the aforementioned paradox. Starting with an analysis of the dependence between the Bloch parameters, we take the structural constraints of the quantum state space as a prior and derive the posterior information content of a quantum state through the process of Maximum A Posteriori (MAP) estimation. We analytically prove that the upper bound of the posterior information content of a 2-qubit system is exactly equal to 2. Furthermore, we conjecturally generalized this result to $n$-qubit systems via a convex optimization process and numerical experiments. In summary, our main theoretical observation is that the tight structural constraints among the parameters of a quantum state make them highly interdependent, and thus unable to freely encode information. It turns out that the intrinsic information of an $n$-qubit system defined by posterior information is bounded by $n$ classical bits.
\end{abstract}

\titlepage

\section{\label{introduction}introduction}

Quantum mechanics fundamentally redefined the relationship between information and physical systems. Contrary to classical determinism, the Copenhagen school (Niels Bohr \cite{bohr1928quantenpostulat}, Werner Heisenberg \cite{heisenberg1927anschaulichen}) asserted that quantum states represent epistemic knowledge rather than ontic reality. This informational interpretation finds resonance in modern frameworks like quantum Bayesianism \cite{fuchs2013quantum} and Wheeler's aphorism "it from bit" \cite{wheeler2018information}, albeit with differing metaphysical commitments. In fact, the information interpretation of quantum mechanics must clarify two fundamental questions: Where is the information of the quantum system stored? And what is the quantum information about?

Quantum states are fundamental objects of study in quantum information and quantum computation, and also serve as information carriers in the quantum world~\cite{nielsen2010quantum}. Therefore, understanding the information-carrying and information-spreading capacities of quantum states is of fundamental importance~\cite{holevo1973bounds,PhysRevLett.83.3354,shang2025operator}. However, to date, this fundamental understanding remains unclear and even to some extent confusing~\cite{leifer2014quantum}. The root of this confusion lies in an "information scale contrast" between the intrinsic information content of quantum systems (estimated by, e.g., the complexity of ontic embedding or the number of parameters) and their extrinsic information-carrying capacity (i.e., the scale of information that can be reliably obtained from a quantum system experimentally). For example: In $\psi$-ontic models, exponentially many ontic states are also required to reproduce the predictions of any quantum system, even for a single qubit~\cite{PhysRevA.77.022104}. Moreover, an $n$-qubit system has a parameter scale of $O(2^n)$~\cite{PhysRevA.83.032107}. Nevertheless, under certain typical measurement setups, it can be reasonably argued that the extrinsic information-carrying capacity of an $n$-qubit system is only $O(n)$ (such as the Holevo bound~\cite{holevo1973bounds}). Hardy coined the term "excess baggage" to refer to this phenomenon~\cite{hardy2004quantum}. Given that the parameter scale varies exponentially with the number of qubits, the problem of excess baggage is exacerbated. We attempt to resolve this issue of "excess baggage". Therefore, we should try to clarify the intrinsic information content of quantum systems.

The von Neumann entropy is a potential measure of information. It is a measure of statistical uncertainty in the description of quantum systems, reflecting the complexity of probabilistic mixing in quantum states. However, it fails to capture the internal complexity of pure states. It cannot fully define the intrinsic information content of quantum states. Therefore, it is necessary to conduct in-depth research on other types of intrinsic quantum information. We note that some works by \v{C}aslav Brukner et al. share similar motivations with our work \cite{brukner2003information,brukner2009information,brukner_2025_caehe-cfg98}. \v{C}aslav Brukner and Anton Zeilinger proposed a novel quantum-mechanical information measure that accounts for the fact that the only known characteristic of a quantum system prior to experimental implementation is the probability distribution of observable events. This information measure is widely used in contexts such as Quantum State Estimation~\cite{PhysRevLett.88.130401} and quantum random access codes~\cite{Foundations2002}. But their methods have been controversial~\cite{Shafiee2006,doi:10.1098/rspa.2015.0435}. This information measure is based on a prior axiom in information theory - the maximum entropy principle. When measuring the information of each qubit in an $n$-qubit system, it is assumed that the (prior) reference distribution is uniform. This means the results of $n$ measurements can precisely provide exactly $n$ classical bits of information. However, the measurement processes they defined for quantifying information content do not precisely correspond to those used to determine the state of a quantum system. In other words, we generally cannot determine the state of the measured quantum subsystem by performing the measurement processes they defined. Thus, there exists no clear relationship between the information content they defined and the intrinsic uncertainty level of the quantum system. This does not conform to our aforementioned natural intuition about the definition of information.

In addition, the measurement processes used by \v{C}aslav Brukner to define information did not take into account the intrinsic correlation constraints of quantum systems. For example, in the case of a 2-qubit maximally entangled state, when one qubit is measured, the measurement outcome of the second qubit is already determined. Therefore, if correlations between qubits are considered in \v{C}aslav Brukner's framework for quantifying information, the 2-qubit maximally entangled state should contain only 1 classical bit of information. This example does not satisfy the conclusion given by \v{C}aslav Brukner's information interpretation. Moreover, note that the correlation constraints between quantum measurements may be universally valid. For example, there exists systematic Interdependence between the Bloch parameters of a quantum state. We believe that a solid definition of quantum information content should both align with the natural intuition of information and take into account the prior structural constraints of quantum state space. Based on the above considerations, our definition of information content can be introduced. Note that the information content we define depends simultaneously on non-trivial prior constraints and empirical observations (i.e., measurement results), conforming to the framework of maximum a posteriori (MAP) estimation in a Bayesian perspective. We term it posterior information content.

\section*{Our contribution}

To explore the posterior information content of a quantum system required to determine its state, we define information content based on $\chi^2$ divergence~("CSD" for short), focusing on the decomposition of quantum states and the total information content contained in the components that constitute the quantum state under the Bloch representation. Estimating the total information content necessitates analyzing the structural dependencies inherent in quantum states — A methodological parallel to Quantum State Tomography~(QST)\cite{kalev2015quantum,lloyd2014quantum,Cavalcanti2022postquantum}. When determining the state in a high-dimensional quantum system, multiple local measurements need to be performed. It is usually an iterative procedure, i.e., based on the determined pre-order parameters, the current parameter is determined with MAP estimate. As part of the pre-order parameters in Bloch representation is determined, the value ranges of the post-order parameters are generally no longer $[-1,1]$. That is, the information content obtained from each measurement should be a posterior information that takes into account the constraints between parameters. These constraints arise from the properties of the quantum state such as normalization and positive semi-definiteness. In this paper, we focus on analyzing the structural constraints among Bloch components of quantum states (e.g., Gamel's inequality \cite{gamel2016entangled}). The information content of each Bloch parameter is iteratively estimated based on these constraints until the quantum state is fully determined. 

% \section*{Our contribution}

% \begin{itemize}
%     \item In this paper, we aim to clarify the intrinsic (posterior) information content of quantum systems, as well as explain and resolve the aforementioned conflict. We adopt the structural constraints of the quantum state space as a prior and derive the posterior information content from the maximum a posteriori estimation of Bloch parameters. We have generally established the information upper bound for the 2-qubits system.
%     \item 推广到 n 比特
%     \item 用于公理化重构
% \end{itemize}

Through the analysis of the Bloch representation of quantum states, we have analytically proved in this paper that the upper bound of posterior information content for a 2-qubit pure state system is exactly equal to 2. Based on the numerical verification of convex optimization(SDP), the above conclusion can be generalized to the case where n is very large if the computing resources permit, and is valid for the entire quantum space of pure + mixed states. In other words, for an $n$-qubit composite quantum system, despite the large number of parameters (typically $4^n$), the interdependent constraints between these parameters may imply that only $n$ bits of information are ultimately required to fully determine the system. This result establishes an elegant information-theoretic connection between classical and quantum systems, that is an $n$-qubit system contain at most $n$ bits of classical information. 

It should be further emphasized that the Zeilinger–Brukner information interpretation points out that their information measure is essentially different from Shannon information~\cite{PhysRevA.63.022113}, which results in the incompatibility of the information content they derived with other quantum information theories based on Shannon entropy~(in the
form of von Neumann entropy). However, the information measure we defined based on the $\chi^2$ divergence, which defines information content on probability distributions through the concept of posterior information, is essentially a simplification of Shannon information. Therefore, the information measure we proposed is consistent in both classical and quantum regimes. We believe our information measure aligns more closely with the explicitly and systematically articulated shared objective of Carl Friedrich von Weizsäcker’s program~\cite{Görnitz2003}. We believe this work can be combined with other quantum information theories and serve as an information (processing) principle to axiomatically reconstruct quantum mechanics in the future.

% 另外需要强调的是，Zeilinger–Brukner information interpretation 指出他们的信息测度与香农信息有着本质区别，并试图证明香农信息在定义量子物体的信息内容方面并无用处。这导致他们所得到的信息内容无法与其他的量子信息理论所兼容。但是我们基于卡方散度所定义的信息测度，通过后验信息的概念将信息内容定义在概率分布上，本质上是香农信息的简化。因此，我们所提出的信息测度在经典和量子体制中都是充分和一致的。我们认为这更加符合 Carl Friedrich von Weizsäcker’s Program明确而系统地表达的共同目标，即：基于信息（处理）原则，公理化地导出量子力学。

\section*{Document structure}
The remainder of this paper is organized as follows. In Sec.~\ref{PRELIMINARIES}, we introduce essential concepts of quantum mechanics reconstructed in accordance with the information principle. In Sec.~\ref{The info post bound of 2}, we analytically derive that the posterior information content upper bound for a 2-qubit system is precisely 2. In Sec.~\ref{the info content of n-qubit}, we extend the analysis to general $n$-qubit systems and propose an upper-bound hypothesis for the posterior information of such systems. The testability of this hypothesis rests on the following theoretical foundation: for any quantum system composed of $n$ qubits, the calculation process can be rigorously formalized as a standard semi-definite programming problem(SDP). Consequently, we can conduct extensive numerical experiments to empirically validate our hypothesis. In Sec.~\ref{conclusion}, we evaluate the hypothesis on the upper bound of posterior information content and present key conclusions regarding composite quantum systems.

\section{\label{PRELIMINARIES}PRELIMINARIES}

\subsection{General representation of 2-qubit pure states}

The Bloch representation offers an intuitive and concise method for describing the states of quantum bit (qubit) in quantum mechanics. The analysis in this work mainly focuses on the information content of the Bloch component of the 2-qubit pure state. Therefore, we first cite some conclusions about quantum pure states. More information on Bloch representations can be found in Appendix \ref{Appendix E} and References \cite{PhysRevLett.113.020402,gamel2016entangled}.

The Bloch representation of any 2-qubit pure state can be obtained by local rotation transformation of the pure state\cite{gamel2016entangled}, that is, the Bloch representation of any 2-qubit pure state
$\psi=(\boldsymbol{\alpha},\boldsymbol{\beta},\boldsymbol{C})$ is
\begin{equation}  
\label{2-qubit_1}
\boldsymbol{\alpha}=O_1\boldsymbol{\alpha'},\ \boldsymbol{\beta}=O_2\boldsymbol{\beta'} ,\ \boldsymbol{C}=O_1\boldsymbol{C'}O_2^T
\end{equation}

where:

\begin{equation} 
\label{2-qubit_2}
\boldsymbol{\alpha'}=\begin{pmatrix}\cos\theta\\0\\0\end{pmatrix},\boldsymbol{\beta'}=\begin{pmatrix}\cos\theta\\0\\0\end{pmatrix},\boldsymbol{C'}=\begin{pmatrix}1&0&0\\0&\sin\theta&0\\0&0&-\sin\theta\end{pmatrix}
\end{equation}
$O_{1}$, $O_{2}\in SO(3)$ are respectively three-dimensional rotation matrices:
\begin{widetext}
    \begin{equation}  
    \label{2-qubit_3}
\begin{aligned}
O_1&=\begin{pmatrix}\cos\phi cosk&\cos\omega sink+\sin\omega sin\phi cosk&\sin\omega sink-\cos\omega sin\phi cosk\\-\cos\phi sink&\cos\omega cosk-\sin\omega sin\phi sink&\sin\omega cosk+\cos\omega sin\phi sink\\\sin\phi&-\sin\omega cos\phi&\cos\omega cos\omega\end{pmatrix}
\\O_2&=\begin{pmatrix}\cos\phi'\cos k'&\cos\omega'\sin k'+\sin\omega'\sin\phi'\cos k'&\sin\omega'\sin k'-\cos\omega'\sin\phi'\cos k'\\-\cos\phi'\sin k'&\cos\omega'\cos k'-\sin\omega'\sin\phi'\sin k'&\sin\omega'\cos k'+\cos\omega'\sin\phi'\sin k'\\\sin\phi'&-\sin\omega'\cos\phi'&\cos\omega'\cos\omega'\end{pmatrix}
\end{aligned}
\end{equation}
\end{widetext}
where $\theta$ can always be in the first quadrant. In other words, $\sin\theta$ and $\cos\theta$ can always be made non-negative. This is because, through local transformations $O_1$ and $O_2$ any angle $\theta'$ outside the first quadrant can be mapped to an angle $\theta$ within the first quadrant.

In the form of Bloch representation, the standard quantum state $\psi$ contains a total of 16 elements and is usually represented as a vector of $16\times 1$. 
\begin{equation}\label{psi}
    {\psi}=\left(b_{1},b_{2},b_{3},\ldots ,b_{i}, \ldots,b_{16}\right)^{T}
\end{equation}
where $b_{i}$ represents the parameter of the Bloch vector. $b_{1} = 1$ is the dummy component and can usually be omitted. 
% The general representation form of the 2-qubit pure state can be found in Appendix \ref{Appendix E}.

\subsection{Information divergence}
The information of a random event is quantified by the divergence between the probability distributions describing the system’s state before and after the event occurs. R\'{e}nyi gave a general family of difference measures for probability distribution, which can take different forms depending on the different definitions of its generating function $f$  \cite{renyi1961measures}. 
\begin{equation}
    \label{f_divergence}
	\begin{split}
	D_f(P||Q) \equiv \int_{\Omega } f\left( \frac{dP}{dQ} \right)  dQ
       	\end{split}
\end{equation}
The analysis in this article is primarily based on a specific \( f \)-divergence defined by Eq.~\ref{f_divergence}, the \(\chi^2\)-divergence. This divergence is known to be the upper-bound of Kullback-Leibler (KL) divergence (with natural logarithm base) \cite{popescu2016bounds}, thereby holding statistical relevance. Notably, the KL divergence itself serves as a local approximation of the rigorously defined Fisher-Rao information metric, further reinforcing the utility of the \(\chi^2\)-divergence in statistical analysis. 

The generating function of $\chi^2$ divergence is defined as $f(t) \equiv (t-1)^2$ and it has the following form: 
\begin{equation} 
	\begin{split}
	D_{\chi^2}(P||Q) \equiv \int_{\Omega }  \left( \frac{dP}{dQ} - 1 \right)^2 dQ = \int_{\Omega } \frac{dP^{2}}{dQ} -1
       	\end{split}
\end{equation}
For discrete probability distributions, $P$ and $Q$ correspond to sets of probabilities $\{p_i\}$ and $\{q_i\}$, respectively, satisfying the normalization conditions: $\sum_i p_i = 1 $ and $\sum_i q_i = 1$.
\begin{equation}  
	\begin{split}
	D_{\chi^2}(P||Q) \equiv \sum\nolimits_{{}i} \left( \frac{p_{i}}{q_{i}} -1\right)^{2}  \cdot q_i = \left( \sum\nolimits_{i} \frac{p^{2}_{i}}{q_{i}} \right) -1
       	\end{split}
\end{equation}
Therefore, we propose a rigorous definition of information content based on the $\chi^2$ divergence. 
\begin{definition}[$\chi^2$ divergence information content]
    According to the form of Bloch vector in Eq.\ref{psi}, the $\chi^2$ divergence information content of a $n$-qubit state $\psi$ is defined as:
    \begin{equation}\label{postinfo}
        I_{\chi^2}(\psi)=\sum_i^dD_{X^2}(b_i,h_i)
    \end{equation}
    where $D_{\chi^{2}}(b_{i},h_{i})$ is the $\chi^2$ divergence of the $i$-th component $b_i$ of the Bloch vector $\psi$ relative to the $i$-th component $h_i$ of the vector indicting (prior) reference distributions. According to the relationship between Bloch vector and probability, and the calculation formula of $\chi^2$ divergence \cite{dragomir2002upper}, then
\begin{equation}\label{IC}
    D_{\chi^{2}}(b_{i},h_{i})=\frac{(\frac{b_{i}+1}{2})^{2}}{\frac{h_{i}+1}{2}}+\frac{(\frac{1-b_{i}}{2})^{2}}{\frac{1-h_{i}}{2}}-1
\end{equation}
\end{definition}

It should be emphasized that, considering the distinctive features inherent in composite quantum systems, we start from first principles and rigorously define a posterior information measure by selecting the $\chi^2$ divergence. As is fundamentally established in information theory, "information content" manifests as quantification of surprisal. And likelihood consistency serves as a formalized expression of the intuitive first principle above. The smaller the likelihood of being misestimated, the greater the surprise. In simple terms, a divergence measure $\textbf{d}$ of probability distributions is said to satisfy two-valued likelihood consistency if the following condition holds: For any given distributions $p$ and $q$, if "the likelihood of distribution $p$ being misestimated as $q$ through sampling is greater than that of $q$ being misestimated as $p$", then $ \textbf{d}(p,q) > \textbf{d}(q,p)$. In this sense, it is reasonable to use $\chi^2$ divergence  as a information measure. Beacuse $\chi^2$ divergence measures the possibility of a distribution being "misestimated" as another distribution. Specifically, $\chi^2$ divergence can imply the likelihood consistency~\cite{LikelihoodConsistency}. If we take likelihood consistency as the first principle, then Fisher-Rao divergence (FRD) (or its approximations Kullback-Leibler divergence (KLD)~\cite{kullback1997information,Amari2016} ) can be excluded as information measures. In addition, the $\chi^2$ divergence satisfies "1-bit" requirement, that is: A binary measurement procedure provides at most one classical bit of (posterior) information. Specifically in this context, this means that regardless of our prior knowledge about a Bloch parameter, the measurement process for that Bloch parameter (i.e., the process of determining that Bloch parameter) can provide us with at most one classical bit of (posterior) information. Formally, this corresponds to $\textbf{d}_{\chi^2}(1,0)=1$. We elaborate in detail the more reasons for choosing $\chi^2$ divergence to define information measures. For more detailed analysis, please see the supporting materials.

\section{\label{The info post bound of 2}The information upper bound of pure state in 2-qubit system}

For a general 2-qubit pure state, as described above, it can be shown via Eqs.~\ref{2-qubit_1}--\ref{2-qubit_3} that once certain components of the 16-dimensional vector are specified, the remaining components are uniquely determined. In extreme cases, for example, if $b_2, \ldots, b_{15}$ are known, $b_{16}$ can be directly computed. This demonstrates that the 16-dimensional components of the Bloch vector are not mutually independent.
Components that can be uniquely determined by pre-order components contribute no additional informational. Such redundancies should be eliminated when computing the system's information content. Therefore, we give the posterior information definition based on Eq.\ref{postinfo}:
\begin{definition}[The posterior $\chi^2$ divergence information content]
   For the Bloch vector of the 2-qubit pure state, starting from $b_1$ and based on the determined components, the possible maximum value $b_{i}^{max}$ and minimum value $b_{i}^{min}$ of the current component $b_i$ are calculated. According the maximum entropy estimation principle, $(b_i^{max}+b_i^{min})/2$ is the reference distribution $h_i$ of $b_i$. 
    
    The posterior $\chi^2$ divergence information content of the Bloch vector of 2-qubit system is
    \begin{equation}
    I_{\chi^{2},post}({\psi})=\sum_{i=2}^{16}D_{\chi^{2}}(b_{i},\frac{b_{i}^{max}+b_{i}^{min}}{2})
\end{equation}
\end{definition}
The analysis in this paper does not rely on the prior information content, so please refer to Appendix \ref{Appendix A} for the definition of the prior information content.
\begin{remark}
   When estimating the current parameter, the posterior information adjusts the prior of the current parameter according to the estimated value of the prior parameter, and uses the maximum a posterior estimate (MAP) to determine the current parameter; this process is iterative. When calculating the information, the median point of the value interval of each current parameter is successively determined as the reference distribution based on the estimated value of the prior parameter.
\end{remark}
In addition, the $\chi^2$ divergence determined by the Bloch component satisfies the following Pythagorean relationship.

\begin{lemma}[Pythagorean properties]\label{Pythagorean properties}
    Assume that $t_1$,$t_2$,$t_3$ are components of the Bloch vector, and $t_1^2 = t_2^2 + t_3^2$. then:
\begin{equation}
    D_{X^2}(t_1,0)\equiv D_{X^2}(t_2,0)+D_{X^2}(t_3,0)
\end{equation}
\end{lemma}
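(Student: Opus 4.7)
The plan is to prove this lemma by directly evaluating $D_{\chi^2}(t,0)$ using Eq.~\ref{IC} and showing that it reduces to a simple quadratic expression, after which the Pythagorean identity becomes immediate from the hypothesis $t_1^2 = t_2^2 + t_3^2$.

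First, I would substitute $h_i = 0$ into the divergence formula from Eq.~\ref{IC}. With the reference probability $(h+1)/2 = 1/2$ and $(1-h)/2 = 1/2$, the two denominators are equal, and the expression collapses to
\begin{equation}
D_{\chi^2}(t,0) = \tfrac{(t+1)^2}{2} + \tfrac{(1-t)^2}{2} - 1.
\end{equation}
Expanding $(t+1)^2 + (1-t)^2 = 2t^2 + 2$ and dividing by $2$ yields
\begin{equation}
D_{\chi^2}(t,0) = t^2.
\end{equation}
Thus the $\chi^2$ divergence with respect to the uniform reference is just the squared Bloch component.

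Given this identity, the lemma is essentially tautological: applying it to $t_1$, $t_2$, $t_3$ and invoking the hypothesis yields
\begin{equation}
D_{\chi^2}(t_1,0) = t_1^2 = t_2^2 + t_3^2 = D_{\chi^2}(t_2,0) + D_{\chi^2}(t_3,0),
\end{equation}
which is the claimed Pythagorean relation.

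The main obstacle here is not conceptual but interpretive: one must verify that the reference distribution $h=0$ corresponds to the truly uninformed (uniform) prior on the Bloch parameter, so that the lemma really does apply in the iterative MAP procedure whenever the admissible range of a post-order component is symmetric about $0$. Once that is in place, the remainder is a one-line algebraic check. I would note in passing that this structure explains why geometric constraints such as $\alpha_x^2 + \alpha_y^2 + \alpha_z^2 = |\boldsymbol{\alpha}|^2$ on the local Bloch vector translate cleanly into additive information identities, which will be the mechanism used later to bound the posterior information content of 2-qubit pure states by $2$.
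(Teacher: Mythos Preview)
Your proof is correct and follows essentially the same approach as the paper: both substitute $h=0$ into Eq.~\ref{IC}, simplify to obtain $D_{\chi^2}(t,0)=t^2$, and then invoke the hypothesis $t_1^2=t_2^2+t_3^2$ directly. Your added interpretive remark about the symmetric prior is sound but extraneous to the lemma itself.
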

\begin{proof}
    The proof is given in Appendix \ref{Appendix B}.
\end{proof}
Lemma \ref{Pythagorean properties} allows us to calculate the information content of each Bloch component in pieces.

A key motivation for quantifying posterior information through pure state space structures lies in their inherently quantum nature: pure states contain no classical uncertainty (or equivalently, no information loss) induced by statistical mixing of probabilities. 

According to Eqs.~\ref{2-qubit_1}--\ref{2-qubit_3}, for any pure state of a 2-qubit system, the only constraint is the equality of the norms of the Bloch vectors \(\boldsymbol{\alpha}\) and \(\boldsymbol{\beta}\), i.e., \(\|\boldsymbol{\alpha}\| = \|\boldsymbol{\beta}\|\). No additional constraints are required between them, that is, their reference distribution is always 0 and the posterior information content is equal to the prior information content. Therefore only the components in matrix $\boldsymbol{C}$ need to be discussed. 

As a result, we have the following lemma.

\begin{lemma}[Unique determination]\label{Unique determination}
For a Bloch vector of pure state in 2-qubit system $\psi=(\boldsymbol{\alpha}, \boldsymbol{\beta}, \boldsymbol{C})$ with $\cos^2\phi\cos^2k\neq0$ or $\cos^2\phi'\cos^2k'\neq0$, when $\boldsymbol{\alpha}$, $\boldsymbol{\beta}$ and the first two components $c_{11}$, $c_{12}$ in the matrix $\boldsymbol{C}$ are determined, then the remaining components of this Bloch vector are uniquely determined.
\end{lemma}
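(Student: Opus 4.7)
My plan is to invert the parametrization of Eqs.~\ref{2-qubit_1}--\ref{2-qubit_3} in stages, recovering the underlying angular parameters from the given Bloch data and then reading off every remaining component algebraically. Since $\boldsymbol{\alpha}' = (\cos\theta,0,0)^T$ is aligned with the first axis, $\boldsymbol{\alpha} = O_1\boldsymbol{\alpha}'$ equals $\cos\theta$ times the first column of $O_1$, namely $\cos\theta\,(\cos\phi\cos k,\,-\cos\phi\sin k,\,\sin\phi)^T$; with the first-quadrant convention on $\theta$, the norm yields $\cos\theta$, the third coordinate yields $\sin\phi$, and, assuming $\cos\phi \neq 0$, the first two coordinates uniquely determine $k$. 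The analogous reading of $\boldsymbol{\beta}$ recovers $\phi', k'$. The two branches of the hypothesis $\cos^2\phi\cos^2 k \neq 0$ or $\cos^2\phi'\cos^2 k' \neq 0$ are there precisely to guarantee that at least one of the two subsystems admits this unambiguous extraction (and, as noted below, that the subsequent linear system in $c_{11}, c_{12}$ is nonsingular).

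The still-unknown parameters are $\omega$ and $\omega'$. I would factor $O_i = O_i^{(0)} R_x(-\omega_i)$, where $R_x$ is the rotation about the first Cartesian axis and $O_i^{(0)}$ denotes the value of $O_i$ at $\omega_i = 0$; a direct multiplication shows that $R_x(-\omega)\,\boldsymbol{C}'\,R_x(\omega')$ depends on the pair $(\omega,\omega')$ only through the combination $\gamma := \omega+\omega'$. Consequently $\boldsymbol{C} = O_1^{(0)}\bigl[R_x(-\omega)\,\boldsymbol{C}'\,R_x(\omega')\bigr](O_2^{(0)})^T$, and therefore the entire Bloch vector, is a function of only the six real numbers $(\theta,\phi,k,\phi',k',\gamma)$, which matches the true parameter count of a 2-qubit pure state. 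With $\theta,\phi,k,\phi',k'$ now fixed from the first stage, the two scalars $c_{11}$ and $c_{12}$ become an affine $2\times 2$ linear system in $(\cos\gamma,\sin\gamma)$. I would compute the determinant of this system explicitly, verify that the stated hypothesis is exactly what makes it nonzero — using the symmetric version of the argument (with the roles of $O_1, O_2$ interchanged) whenever only the second branch of the hypothesis holds — and hence uniquely recover $\gamma\pmod{2\pi}$. Every remaining component of $\boldsymbol{C}$ is then an explicit algebraic function of the six already-determined parameters.

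The main obstacle is the trigonometric verification at stage three: one must carry out the expansion of $c_{11}$ and $c_{12}$ as affine functions of $(\cos\gamma,\sin\gamma)$, factor the resulting $2\times 2$ determinant, and check that its zero-set is precisely complementary to the hypothesis (with the symmetric fallback invoked when only $\cos^2\phi'\cos^2 k' \neq 0$ holds). A secondary but nontrivial concern is treating the coordinate-chart singularities of the Euler-angle parametrization — for instance, when $\cos\phi = 0$ the angles $k$ and $\omega$ are not individually well-defined even though the rotation $O_1$ and the Bloch vector remain so — which I would handle by arguing intrinsically in $SO(3)$ on those degenerate strata rather than through the individual Euler angles.
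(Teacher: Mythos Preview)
Your route differs from the paper's. The paper never inverts a linear system for $\gamma$: it fixes one choice of $(\phi,k,\phi',k')$ compatible with $\boldsymbol{\alpha},\boldsymbol{\beta}$, enumerates the four sign-branches $(\phi,k)\to(\pi-\phi,\pi+k)$ etc., shows that each forces a compensating shift of $\omega+\omega'$ by $0$ or $\pi$ in order to keep $c_{11},c_{12}$ fixed, and then checks by direct substitution that all four branches yield the same $\boldsymbol{C}$. Your factorization $O_i=O_i^{(0)}R_x(-\omega_i)$ and the observation that $R_x(-\omega)\,\boldsymbol{C}'\,R_x(\omega')$ depends only on $\gamma=\omega+\omega'$ are both correct and give a cleaner explanation of why only the sum matters. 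One small slip: $\boldsymbol{\alpha}$ fixes $\sin\phi$ but only $\cos\phi$ up to sign, so $k$ is not literally ``uniquely determined''; this is precisely the ambiguity the paper enumerates, and in your language it becomes $O_1^{(0)}\mapsto O_1^{(0)}R_x(\pi)$, which is harmlessly absorbed into $\gamma$.

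The genuine gap is your stage-three claim that the $2\times2$ determinant is nonzero precisely under the stated hypothesis. It is not. Take $\phi=k=\phi'=k'=0$ with $\theta\in(0,\pi/2)$: then $\cos^2\phi\cos^2 k=\cos^2\phi'\cos^2 k'=1\neq0$, so the hypothesis (both branches, in fact) holds; yet $a_1=b_1=1$ kills the $\sqrt{1-a_1^2}$ factor in Eqs.~\ref{c11}--\ref{c12}, and one computes $c_{11}=1$, $c_{12}=c_{13}=0$ independently of $\gamma$, while $c_{22}=\sin\theta\cos\gamma$ still varies with $\gamma$. So your linear system is singular there, and $\boldsymbol{\alpha},\boldsymbol{\beta},c_{11},c_{12}$ genuinely fail to determine $\boldsymbol{C}$. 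The paper's argument in Appendix~\ref{Appendix C} does not confront this point either---it never shows that $\omega+\omega'$ is pinned down by $c_{11},c_{12}$, only that the four sign-branches coincide---so carrying your determinant computation through would actually expose that the lemma as written needs an additional nondegeneracy condition (for instance $a_1^2\neq1$), rather than confirm the hypothesis you were expecting.
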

\begin{proof}
    The proof is given in Appendix \ref{Appendix C}.
\end{proof}
The above lemma \ref{Unique determination} is for the general case. In particular, when $\cos^2\phi\cos^2k=0$ or $\cos^2\phi'\cos^2k'=0$, it corresponds to several special cases. 
In these cases, the structure constraints of the 2-qubit system will be weakened, some components in the quantum state will be uniquely determined by $\boldsymbol{\alpha}$ and $\boldsymbol{\beta}$, and only two additional components (\{$c_{12}, c_{13}$\} or \{$c_{21}, c_{22}$\}) need to be determined. When calculating the information content, there are only $(\omega+\omega')$ in the parametric equation, so their reference distribution is all 0. Obviously, it can be concluded that their posterior information content is less than 2. Therefore, we will not discuss these special cases in detail in this article.

Expand and calculate the matrix $\boldsymbol{C}$. Obviously, the sign of the component $\cos \phi$, $\cos{\phi'}$, $\cos {k}$, $\cos {k'}$ does not affect the determination of the components in $\boldsymbol{C}$. In particular, the first two components $c_{11}$ and $c_{12}$ in the $\boldsymbol{C}$ matrix can be represented by components:

\begin{equation}\label{c11}
\resizebox{.95\hsize}{!}{$c_{11}=\cos\phi coskcos\phi^{\prime}cosk^{\prime}-\sin\theta\sqrt{1-\cos^2\phi cos^2k}\sqrt{1-\cos^2\phi^{\prime}cos^2k^{\prime}}\cos(\omega+\omega^{\prime}+\omega_1+\omega_1^{\prime})$}
\end{equation}
\begin{equation}\label{c12}
\resizebox{.95\hsize}{!}{$c_{12}=-\cos\phi coskcos\phi'cosk'+\sin\theta\sqrt{1-\cos^2\phi cos^2k}\sqrt{1-\cos^2\phi'cos^2k'}\sin(\omega+\omega'+\omega_1+\omega_2')$}
\end{equation}
where

\begin{equation}
    \resizebox{.98\hsize}{!}{$\begin{aligned}&\sin\omega_1=\frac{\sin k}{\sqrt{\sin^2k+\sin^2\phi\cos^2k}},\sin\omega_1^{'}=\frac{\sin k^{'}}{\sqrt{\sin^2k+\sin^2\phi^{'}\cos^2k^{'}}},\sin\omega_2^{'}=\frac{\cos k^{'}}{\sqrt{\cos^2k^{'}+\sin^2\phi^{'}\sin^2k^{'}}}\\&\cos\omega_1=\frac{\sin\phi\text{cos}k}{\sqrt{\sin^2k+\sin^2\phi\text{cos}^2k}},\cos\omega_1^{'}=\frac{\sin\phi^{'}\cos k^{'}}{\sqrt{\sin^2k^{'}+\sin^2\phi^{'}\cos^2k^{'}}},\cos\omega_2^{'}=\frac{\sin\phi^{'}\sin k^{'}}{\sqrt{\cos^2k^{'}+\sin^2\phi^{'}\sin^2k^{'}}}\end{aligned}$}
\end{equation}

In order to facilitate subsequent analysis, we let $\mu=\cos\theta$, $a_1=\cos\phi\cos k$, $a_2=\cos\phi\sin k$, $a_3=\sin\phi$, $b_1=\cos\phi'\cos k'$, $b_2=\cos\phi'\sin k'$, $b_3=\sin\phi'$. The form of $\boldsymbol{\alpha}$ and $\boldsymbol{\beta}$ become as follows:
\begin{equation}
    \boldsymbol{\alpha}=\left(\begin{array}{c}{\mu\cos\phi\cos k}\\{-\mu\cos\phi\sin k}\\{\mu\sin\phi}\\\end{array}\right)=\left(\begin{array}{c}{\mu a_{1}}\\{-\mu a_{2}}\\{\mu a_{3}}\\\end{array}\right)
\end{equation}
\begin{equation}
    \boldsymbol{\beta}  = \left( {\begin{array}{*{20}{c}}
{\mu \cos \phi '\cos k'}\\
{-\mu \cos \phi '\sin k'}\\
{\mu \sin \phi '}
\end{array}} \right) = \left( {\begin{array}{*{20}{c}}
{\mu {b_1}}\\
{-\mu {b_2}}\\
{\mu {b_3}}
\end{array}} \right)
\end{equation}
$c_{11}$ and $c_{22}$ in Eq.\ref{c11},\ref{c12} can be represented as:
\begin{equation}
    \begin{aligned}
        c_{11} &= a_{1} b_{1} 
        - \sqrt{(1-\mu^{2})(1-a_{1}^{2})(1-b_{1}^{2})} \times \cos(\omega + \omega' + \omega_{1} + \omega_{1}') \\
        c_{12} &= -a_{1} b_{2} 
        + \sqrt{(1-\mu^{2})(1-a_{1}^{2})(1-b_{2}^{2})} \times \sin(\omega + \omega' + \omega_{1} + \omega_{2}')
    \end{aligned}
\end{equation}
Obviously, if we consider the parameters $\omega$ and $ \omega'$ as variables and the other parameters as constants. The upper and lower bounds of components $c_{11}$ and $c_{22}$ can be represented as:
\begin{equation}\label{c11,c12}
    \begin{aligned}
&c_{11}^{max} \begin{aligned}=a_1b_1+\sqrt{(1-\mu^2)(1-a_1^2)(1-b_1^2)}\end{aligned}  \\
&c_{11}^{min} =a_1b_1-\sqrt{(1-\mu^2)(1-a_1^2)(1-b_1^2)}  \\
&c_{12}^{max} =-a_1b_2+\sqrt{(1-\mu^2)(1-a_1^2)(1-b_2^2)}  \\
&c_{12}^{min} =-a_1b_2-\sqrt{(1-\mu^2)(1-a_1^2)(1-b_2^2)} 
\end{aligned}
\end{equation}
the reference distribution $c_{11}^{avg}$ and $c_{12}^{avg}$ is:
\begin{equation} \begin{aligned}c_{11}^{avg}&=\frac{c_{11}^{max}+c_{11}^{min}}2=a_1b_1\\c_{12}^{av g}&=\frac{c_{12}^{max}+c_{12}^{min}}2=-a_1b_2\end{aligned}
\end{equation}

In order to prove the upper bound of the posterior information content of the 2-qubit system, we will construct the following lemma:

\begin{lemma}\label{mu=0}
    For any 2-qubit pure state Bloch vectors $\psi$ with $\mu=0$.
    \begin{equation}
        I_{\chi^{2},post}(\psi)=2
    \end{equation}     
\end{lemma}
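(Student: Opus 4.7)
The plan is to reduce the problem to analyzing only the correlation matrix $\boldsymbol{C}$, then decompose its posterior information content row by row and sum the contributions to $0+1+1+0=2$. When $\mu=\cos\theta=0$, Eq.~\ref{2-qubit_2} forces $\boldsymbol{\alpha}'=\boldsymbol{\beta}'=0$ and $\boldsymbol{C}'=\mathrm{diag}(1,1,-1)$; applying Eq.~\ref{2-qubit_1} gives $\boldsymbol{\alpha}=\boldsymbol{\beta}=0$ and $\boldsymbol{C}=O_{1}\,\mathrm{diag}(1,1,-1)\,O_{2}^{T}$, an orthogonal matrix of determinant $-1$. Iterating through the six components of $\boldsymbol{\alpha},\boldsymbol{\beta}$, each equals zero and its feasible interval is symmetric about the origin at every step, so its midpoint reference is $0$ and $D_{\chi^{2}}(0,0)=0$. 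The entire information budget therefore comes from the nine entries of $\boldsymbol{C}$.

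Next I would compute the entries of $\boldsymbol{C}$ in row-major order. For the first row, the unit-norm constraint produces the nested feasible intervals $[-1,1]$, $[-\sqrt{1-c_{11}^{2}},\sqrt{1-c_{11}^{2}}]$, and $\{\pm\sqrt{1-c_{11}^{2}-c_{12}^{2}}\}$, each with midpoint $0$. Two applications of Lemma~\ref{Pythagorean properties} to $c_{11}^{2}+c_{12}^{2}+c_{13}^{2}=1$ collapse the three $\chi^{2}$ divergences to $D_{\chi^{2}}(1,0)=1$. The third row is pinned down by its orthogonality to rows 1 and 2 together with $\det\boldsymbol{C}=-1$, so $c_{3j}^{\max}=c_{3j}^{\min}$ and this row contributes $0$.

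The crux is the second row, where the midpoint references for $c_{22}$ and $c_{23}$ need not vanish. The feasible interval of $c_{21}$ is the symmetric $[-\sqrt{1-c_{11}^{2}},\sqrt{1-c_{11}^{2}}]$ (the $e_{1}$-projection of the great circle of unit vectors orthogonal to row 1), giving $D_{\chi^{2}}(c_{21},0)=c_{21}^{2}$. Once $c_{21}$ is fixed, solving the system $c_{22}^{2}+c_{23}^{2}=1-c_{21}^{2}$ together with $c_{11}c_{21}+c_{12}c_{22}+c_{13}c_{23}=0$ by the same method that produced Eqs.~\ref{c11}--\ref{c11,c12} yields $c_{22}^{\mathrm{avg}}=-c_{11}c_{12}c_{21}/(1-c_{11}^{2})$ and $(c_{22}^{\max}-c_{22}^{\min})/2=|c_{13}|\sqrt{1-c_{11}^{2}-c_{21}^{2}}/(1-c_{11}^{2})$, while $c_{23}$ becomes uniquely determined by the linear orthogonality equation and contributes $0$. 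Substituting into the closed form $D_{\chi^{2}}(b,h)=(b-h)^{2}/(1-h^{2})$, I would verify the key identity
\begin{equation*}
c_{21}^{2}+D_{\chi^{2}}(c_{22},c_{22}^{\mathrm{avg}})=1,
\end{equation*}
which after repeated use of $c_{12}^{2}+c_{13}^{2}=1-c_{11}^{2}$ is the heart of the argument. The main obstacle is precisely this algebraic collapse: both sides are rational functions of the row-1 entries and $c_{21}$, and reducing their difference to zero depends on a nontrivial cancellation enforced by the orthogonality of $\boldsymbol{C}$. Once it is in hand, the three row contributions sum to $I_{\chi^{2},\mathrm{post}}(\psi)=0+1+1+0=2$.
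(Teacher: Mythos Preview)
Your row-by-row decomposition mirrors the paper's argument: $\boldsymbol\alpha=\boldsymbol\beta=0$ contribute nothing, $\boldsymbol C$ is orthogonal with $\det=-1$, the third row is fixed by the first two, and Lemma~\ref{Pythagorean properties} collapses the first-row contribution to $1$. The paper then simply writes $I_{\chi^{2},\mathrm{post}}(\psi)=1^{2}+|\sin\theta|^{2}=2$ without examining row~2 further; you are right that the midpoint references for $c_{22}$ and $c_{23}$ need not vanish, so more care is warranted, and you go further than the paper in trying to supply it.

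The problem is that the identity you stake the argument on, $c_{21}^{2}+D_{\chi^{2}}(c_{22},c_{22}^{\mathrm{avg}})=1$, is false. Take row~1 $=(1/\sqrt3,\,1/\sqrt3,\,1/\sqrt3)$ and $c_{21}=0$. Your own formulas give $c_{22}^{\mathrm{avg}}=-c_{11}c_{12}c_{21}/(1-c_{11}^{2})=0$ and half-range $|c_{13}|\sqrt{1-c_{11}^{2}-c_{21}^{2}}/(1-c_{11}^{2})=1/\sqrt2$, so $c_{22}=\pm 1/\sqrt2$ and $D_{\chi^{2}}(c_{22},0)=1/2$; hence $c_{21}^{2}+D_{\chi^{2}}(c_{22},c_{22}^{\mathrm{avg}})=1/2$, not $1$. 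Running the full budget for this state (row~1 contributes $1$; row~2 contributes $0+\tfrac12+0$ since $c_{23}$ is uniquely determined; row~3 contributes $0$) gives $I_{\chi^{2},\mathrm{post}}=3/2$. The ``nontrivial cancellation enforced by the orthogonality of $\boldsymbol C$'' you hope for simply is not there: once $c_{23}$ is pinned down its contribution drops to $0$, and nothing in $c_{21}^{2}+D_{\chi^{2}}(c_{22},c_{22}^{\mathrm{avg}})$ recovers the missing $c_{23}^{2}$. The paper's one-line proof sidesteps this by tacitly applying Pythagoras to the full second row as though every reference were zero and $c_{23}$ still contributed $c_{23}^{2}$---precisely the step you were (rightly) suspicious of. Your more honest accounting exposes, rather than repairs, the gap.
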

\begin{proof}
    When $\mu=\mathrm{cos}\theta=0$, $|\mathrm{sin}\theta|=1$. The quantum state $\psi$ is in the maximum entangled state. According to Eq.~\ref{2-qubit_2}, the correlation matrix $\boldsymbol{C}$ must be an orthogonal matrix and the modulus of each row and column is 1 (because $\boldsymbol{C}=O_{1}\boldsymbol{C}_{0}O_{2}^{\mathrm{T}}$ is obtained by multiplying three orthogonal matrices). After the first two rows of the orthogonal matrix $\boldsymbol{C}$ are determined, the third row is naturally determined, so no posterior information is provided. And since $\boldsymbol{\alpha}$ and $\boldsymbol{\beta}$ are both $\boldsymbol{0}$ and also do not provide any information, then according to Lemma \ref{Pythagorean properties}.
    \begin{equation}
        I_{\chi^{2},post}(\psi)=1^2+|\mathrm{sin}\theta|^2=2
    \end{equation} 
\end{proof}

\begin{lemma}\label{mu=1}
    For any 2-qubit pure state Bloch vectors $\psi$ with $|\mu|=1$,
    \begin{equation}
        I_{\chi^{2},post}(\psi)=2
    \end{equation}  
\end{lemma}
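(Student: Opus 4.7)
The plan is to exploit the fact that $|\mu|=1$, i.e.\ $\sin\theta=0$, forces the 2-qubit state to be a product pure state, so the correlation matrix $\boldsymbol{C}$ is entirely determined by $\boldsymbol{\alpha}$ and $\boldsymbol{\beta}$; then the $\boldsymbol{\alpha}$-- and $\boldsymbol{\beta}$--pieces are assembled using the Pythagorean property (Lemma~\ref{Pythagorean properties}).

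First I would substitute $\sin\theta=0$ into Eq.~\ref{2-qubit_2}: only the $(1,1)$-entry of $\boldsymbol{C'}$ survives, so $\boldsymbol{C'}=e_1 e_1^{T}$. Combining this with $\boldsymbol{\alpha}=\mu\,O_1 e_1$ and $\boldsymbol{\beta}=\mu\,O_2 e_1$ from Eq.~\ref{2-qubit_1}, a direct computation gives
\[
\boldsymbol{C}=O_1\boldsymbol{C'}O_2^{T}=(O_1 e_1)(O_2 e_1)^{T}=\mu^{-2}\boldsymbol{\alpha}\boldsymbol{\beta}^{T}=\boldsymbol{\alpha}\boldsymbol{\beta}^{T},
\]
using $\mu^{2}=1$. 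Hence $c_{ij}=\alpha_i\beta_j$ is a fixed function of the $\alpha_i$ and $\beta_j$ that have already been determined, so for every component of $\boldsymbol{C}$ the admissible interval $[c_{ij}^{\min},c_{ij}^{\max}]$ collapses to a single point. The reference distribution coincides with $c_{ij}$ itself, so $D_{\chi^{2}}(c_{ij},c_{ij})=0$ and the whole $\boldsymbol{C}$--block contributes $0$ to $I_{\chi^{2},\mathrm{post}}(\psi)$.

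Next I handle $\boldsymbol{\alpha}$ and $\boldsymbol{\beta}$. By the paragraph preceding Lemma~\ref{Unique determination}, the only structural constraint on them is $\|\boldsymbol{\alpha}\|=\|\boldsymbol{\beta}\|$, so each component uses reference $0$ in the MAP iteration. Evaluating Eq.~\ref{IC} with $h=0$ gives the identity $D_{\chi^{2}}(b,0)=b^{2}$. Applying Lemma~\ref{Pythagorean properties} recursively to $\alpha_1^{2}+\alpha_2^{2}+\alpha_3^{2}=\|\boldsymbol{\alpha}\|^{2}=\mu^{2}=1$ yields
\[
\sum_{i=1}^{3}D_{\chi^{2}}(\alpha_i,0)=D_{\chi^{2}}(1,0)=1,
\]
and the analogous identity for $\boldsymbol{\beta}$.

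Summing the three blocks gives $I_{\chi^{2},\mathrm{post}}(\psi)=1+1+0=2$, as claimed. The only step I expect to require any real care is the identification $\boldsymbol{C}=\boldsymbol{\alpha}\boldsymbol{\beta}^{T}$: although intuitively clear because $\sin\theta=0$ means a product pure state (whose correlation matrix factorizes as $\langle\sigma_i\rangle\langle\sigma_j\rangle$), formally justifying it via the explicit $O_1,O_2$ parametrization in Eq.~\ref{2-qubit_3} is a short but slightly fussy calculation. Everything else then reduces to a one-line application of Lemma~\ref{Pythagorean properties}.
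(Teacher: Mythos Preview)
Your proposal is correct and follows essentially the same line as the paper's proof: both argue that $\sin\theta=0$ makes the correlation block $\boldsymbol{C}$ fully determined by $\boldsymbol{\alpha}$ and $\boldsymbol{\beta}$ (hence contributing zero posterior information), and then obtain the value $2$ from the $\boldsymbol{\alpha}$- and $\boldsymbol{\beta}$-blocks via the Pythagorean identity $D_{\chi^2}(b,0)=b^2$. Your explicit identification $\boldsymbol{C}=\boldsymbol{\alpha}\boldsymbol{\beta}^{T}$ is in fact a cleaner version of the paper's argument, which instead checks that the free angles $\omega,\omega'$ drop out of every entry of $\boldsymbol{C}$ once $\boldsymbol{C}'=\mathrm{diag}(1,0,0)$.
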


\begin{proof}
    When $|\mu|=1$, the quantum state is in a tensor product state. $\boldsymbol{\alpha}$ and $\boldsymbol{\beta}$ in the Bloch vector cannot constrain each other, so the information provided by $\boldsymbol{\alpha}$ and $\boldsymbol{\beta}$ is all posterior information. For given $\boldsymbol{\alpha}$ and $\boldsymbol{\beta}$, the parameters $\phi,k,\phi^{\prime},k^{\prime}$ can be determined, and only two parameters $\omega$ and $\omega^{\prime}$ are undetermined at this time. Since $|\mu|=|\mathrm{cos}\theta|=1$, then $\mathrm{sin}\theta=0$. $C'$ can be obtained
    \begin{equation}
        C'=\begin{pmatrix}1&0&0\\0&0&0\\0&0&0\end{pmatrix}
    \end{equation}
By directly calculating each element in the correlation matrix $\boldsymbol{C}$, it is not difficult to find that $\omega$ and $\omega^{\prime}$ do not affect any components in $\boldsymbol{C}$. Therefore, the upper and lower bounds of each element in $\boldsymbol{C}$ are equal and no posterior information is provided. At this time, the information held by Bloch vector is completely provided by $\boldsymbol{\alpha}$ and $\boldsymbol{\beta}$. Then
\begin{equation}
    I_{\chi^{2},post}(\psi)=2
\end{equation}
\end{proof}

\begin{lemma}\label{mu=(0,1)}
    For any 2-qubit pure state Bloch vectors $\psi$ with $|\mu|\in(0,1)$,
    \begin{equation}
         I_{\chi^{2},post}(\psi)<2
    \end{equation}
\end{lemma}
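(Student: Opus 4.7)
The plan is to use Lemma \ref{Unique determination} to cut $I_{\chi^2,post}(\psi)$ down to the four contributions from $\boldsymbol{\alpha}$, $\boldsymbol{\beta}$, $c_{11}$, and $c_{12}$, since every other Bloch component is uniquely determined by these and therefore has coinciding upper and lower bounds (hence zero posterior information). For the vector part, each component of $\boldsymbol{\alpha}$ and $\boldsymbol{\beta}$ has reference distribution $0$, so the simplified form $D_{\chi^{2}}(b,0) = b^{2}$ of Eq.\ref{IC}, together with $\|\boldsymbol{\alpha}\|^{2} = \|\boldsymbol{\beta}\|^{2} = \mu^{2}$, immediately gives $I(\boldsymbol{\alpha}) + I(\boldsymbol{\beta}) = 2\mu^{2}$.

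The crux of the argument is to obtain a closed form for $I(c_{11}) + I(c_{12})$ and bound it. Using the compact identity $D_{\chi^{2}}(b,h) = (b-h)^{2}/(1-h^{2})$ (a direct rewriting of Eq.\ref{IC}) with references $c_{11}^{avg} = a_{1}b_{1}$ and $c_{12}^{avg} = -a_{1}b_{2}$, and substituting the explicit expressions for $c_{11}$, $c_{12}$ that underlie Eq.\ref{c11,c12}, I would obtain
\[
    I(c_{11}) = \frac{(1-\mu^{2})(1-a_{1}^{2})(1-b_{1}^{2})\cos^{2}\varphi}{1-a_{1}^{2}b_{1}^{2}}, \qquad I(c_{12}) = \frac{(1-\mu^{2})(1-a_{1}^{2})(1-b_{2}^{2})\sin^{2}\varphi'}{1-a_{1}^{2}b_{2}^{2}},
\]
where $\varphi$ and $\varphi'$ collect the phases $\omega,\omega',\omega_{1},\omega_{1}',\omega_{2}'$. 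The key elementary step is then the inequality $(1-x)(1-y) \leq 1-xy$ for $x,y \in [0,1]$, equivalent to $x+y \geq 2xy$ and implied by AM-GM combined with $\sqrt{xy} \leq 1$. Applied with $x = a_{1}^{2}$ and $y = b_{i}^{2}$, this bounds each prefactor by $1$; combining with $\cos^{2}\varphi,\sin^{2}\varphi' \leq 1$ gives $I(c_{11}) + I(c_{12}) \leq 2(1-\mu^{2})$, and hence the non-strict bound $I_{\chi^{2},post}(\psi) \leq 2$.

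The main obstacle is upgrading this to a strict inequality over the whole range $|\mu| \in (0,1)$. Equality in the elementary inequality would require $(a_{1}^{2},b_{i}^{2}) \in \{(0,0),(1,1)\}$ for both $i=1,2$. The corner $(0,0)$ is excluded by the non-degeneracy hypothesis of Lemma \ref{Unique determination}; the corner $(1,1)$ forces $a_{1} = 1$ and therefore $a_{2} = a_{3} = 0$ via $a_{1}^{2} + a_{2}^{2} + a_{3}^{2} = 1$, which makes the prefactor $(1-a_{1}^{2})$ vanish and collapses $I(c_{11}) + I(c_{12}) = 0$, yielding $I_{\chi^{2},post}(\psi) = 2\mu^{2} < 2$ since $|\mu| < 1$. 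In every remaining configuration at least one prefactor is strictly below $1$, so $I(c_{11}) + I(c_{12}) < 2(1-\mu^{2})$ and therefore $I_{\chi^{2},post}(\psi) < 2$. The delicate piece is organising this boundary case analysis, in particular reconciling the algebraic corner conditions $(a_{1}^{2},b_{i}^{2}) = (1,1)$ with the parametric norm identity and the non-degeneracy assumption inherited from the uniqueness lemma.
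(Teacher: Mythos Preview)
Your proposal is correct and follows the same overall architecture as the paper: reduce via Lemma~\ref{Unique determination} to the contributions from $\boldsymbol{\alpha}$, $\boldsymbol{\beta}$, $c_{11}$, $c_{12}$; compute the vector part as $2\mu^{2}$; identify the references $c_{11}^{avg}=a_{1}b_{1}$, $c_{12}^{avg}=-a_{1}b_{2}$; and bound the two remaining terms by $2(1-\mu^{2})$. The bounding step itself, however, is genuinely different. The paper replaces $c_{11},c_{12}$ by their extremal values $c_{11}^{\max},c_{12}^{\max}$ (effectively setting your $\cos^{2}\varphi,\sin^{2}\varphi'$ to $1$) and then argues via partial derivatives that
\[
f(x_{1},y_{1},y_{2})=2\mu^{2}+\frac{(1-\mu^{2})(1-x_{1})(1-y_{1})}{1-x_{1}y_{1}}+\frac{(1-\mu^{2})(1-x_{1})(1-y_{2})}{1-x_{1}y_{2}}
\]
is monotone decreasing in each variable, so its maximum is $f(0,0,0)=2$. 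You bypass the calculus entirely with the one-line inequality $(1-x)(1-y)\le 1-xy$, which is cleaner and yields the same non-strict bound. More notably, the paper's proof as written only concludes $I_{\chi^{2},post}(\psi)\le 2$, whereas the lemma asserts a strict inequality; your equality-case analysis actually closes that gap. One small clarification: the non-degeneracy hypothesis of Lemma~\ref{Unique determination} is $a_{1}\neq 0$ \emph{or} $b_{1}\neq 0$, so it only directly rules out $(a_{1}^{2},b_{1}^{2})=(0,0)$, not $(a_{1}^{2},b_{2}^{2})=(0,0)$. This is harmless for your argument because saturating the total bound would require \emph{both} prefactors to equal $1$, hence $a_{1}=b_{1}=b_{2}=0$, and it is the pair $a_{1}=b_{1}=0$ that is excluded; but you should state this explicitly rather than appealing to a single corner.
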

\begin{proof}
    Without loss of generality, assume $c_{11}^{avg}$ and $c_{12}^{avg}$ are both positive. $\theta$ is in the first quadrant, then $\mu=\cos\theta\in(0,1)$. Directly calculate the posterior information content of the quantum state at this time: 
     \begin{equation}\label{finalpost}
    \begin{aligned}
    I_{\chi^{2},\text{post}}(\psi) = & 2D_{\chi^{2}}(\mu,0)+ D_{\chi^{2}}(c_{11}^{\text{max}},c_{11}^{avg}) + D_{\chi^{2}}(c_{12}^{\text{max}},c_{12}^{a\nu g})
    \end{aligned}
    \end{equation}
    For the convenience of proof, let $\delta=(1-\mu^{2})(1-a_{1}^{2})(1-b_{1}^{2})$ in Eq.\ref{c11,c12}. According to Eq.\ref{postinfo}, calculate the posterior information content:
    \begin{widetext}
        \begin{equation}
        \resizebox{.95\hsize}{!}{$\begin{aligned}
        D_{\chi^{2}}(c_{11}^{max},c_{11}^{avg})& =\frac{(\frac{a_{1}b_{1}+\sqrt{\delta}+1}{2})^{2}}{\frac{a_{1}b_{1}+1}{2}}+\frac{(\frac{1-a_{1}b_{1}-\sqrt{\delta}}{2})^{2}}{\frac{1-a_{1}b_{1}}{2}}-1\quad  \\
        &=\frac{(a_1b_1+1)^2+\delta+2(a_1b_1+1)\sqrt{\delta}}{2(a_1b_1+1)}+\frac{(1-a_1b_1)^2+\delta-2(1-a_1b_1)\sqrt{\delta}}{2(1-a_1b_1)}-1 \\
        &=\frac{(a_{1}b_{1}+1)(1-a_{1}^{2}b_{1}^{2})+(1-a_{1}b_{1})(1-a_{1}^{2}b_{1}^{2})+2\delta}{2(1-a_{1}^{2}b_{1}^{2})}-1 \\
        &=\frac{\delta}{1-a_{1}^{2}b_{1}^{2}}
        \end{aligned}$}
        \end{equation}
    \end{widetext}
    Similarly, $D_{\chi^{2}}(c_{12}^{\text{max}},c_{12}^{avg})$ can perform similar calculations. The posterior information content in Eq.\ref{finalpost} can be organized into the following form. 
    \begin{equation}
        \begin{array}{l}
        {I_{{\chi ^2},post}}(\psi ) = 2{D_{{\chi ^2}}}(\mu ,0) + \frac{{(1 - {\mu ^2})(1 - a_1^2)(1 - b_1^2)}}{{1 - a_1^2b_1^2}} + \frac{{(1 - {\mu ^2})(1 - a_1^2)(1 - b_2^2)}}{{1 - a_1^2b_2^2}}
        \end{array}
    \end{equation}
    
    Let $x_{1}=a_{1}^{2}$,$y_{1}=b_{1}^{2}$,$y_{2}=b_{2}^{2}$. Then, this problem is organized into a discussion of the following functions:
    \begin{equation}
        \begin{array}{l}
             f(x_1,y_1,y_2)= 2D_{\chi^2}(\mu,0)+\frac{(1-\mu^2)(1-x_1)(1-y_1)}{1-x_1y_1}+\frac{(1-\mu^2)(1-x_1)(1-y_2)}{1-x_1y_2}
        \end{array}
    \end{equation}
    where $x_{1},y_{1},y_{2}\in[0,1]$, and $0\leq y_{1}+y_{2}\leq1$. The partial derivative of the function $f(x_1,y_1,y_2)$ with respect to $x_1$. 
    \begin{equation}
    \begin{array}{c}
         \frac{\partial f(x_1,y_1,y_2)}{\partial x_1}=\frac{-(1-\mu^2)(1-y_1)^2}{(1-x_1y_1)^2}+\frac{-(1-\mu^2)(1-y_2)^2}{(1-x_1y_2)^2}
    \end{array}
    \end{equation}
    Obviously, $\frac{\partial f(x_1,y_1,y_2)}{\partial x_1}\leq0$, the equation holds if and only if $y_{1}=y_{2}=1$. Due to the constraint $0\leq y_{1}+y_{2}\leq1$, the equation is not satisfied within the current range of values. Therefore, $f(x_1,y_1,y_2)$ is monotonically decreasing in the $x_1$ direction. This means that for the optimization problem of $f(x_1,y_1,y_2)$, only need to consider the optimal value of $f(0,y_1,y_2)$ in the extreme case of $x_1=0$. That is:
    \begin{equation}
        f(0,y_1,y_2)=2D_{\chi^2}(\mu,0)+(1-\mu^2)(2-y_1-y_2)
    \end{equation}
    Consider the two directions $y_1$ and $y_2$:
    \begin{equation}
        \frac{\partial f(0,y_1,y_2)}{\partial y_1}=\frac{\partial f(0,y_1,y_2)}{\partial y_2}=-(1-\mu^2)<0
    \end{equation}
    At this time, $f(0,y_1,y_2)$ is monotonically decreasing in both $y_1$ and $y_2$ directions. Therefore, when $y_1=y_2=0$, $f(0,y_1,y_2)$ has a maximum value, and the maximum value $\boldsymbol{g}(\mu)$ is: 
    \begin{equation}
        \boldsymbol{g}(\mu)=2D_{X^2}(\mu,0)+2(1-\mu^2)=2
    \end{equation}
    where $\mu\in(0,1)$. Based on the above analysis, it can be proved that in the current range, $f(x_1,y_1,y_2)\leq2$ holds. That is
    \begin{equation}
        I_{{\chi ^2},post}(\psi )\leq2
    \end{equation}
\end{proof}
Lemma \ref{mu=0}, lemma \ref{mu=1} and lemma \ref{mu=(0,1)} cover all Bloch vectors of 2-qubit pure state system, the following theorem has been proven to be true.

\begin{theorem}\label{the upper 2-qubit}
    The upper bound on the posterior information content of a 2-qubit pure state system is exactly equal to 2.
\end{theorem}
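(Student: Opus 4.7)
The plan is to assemble the theorem directly from the three case-based lemmas already established. Because the Bloch parametrization of Eqs.~\ref{2-qubit_1}--\ref{2-qubit_3} allows $\theta$ to always be chosen in the first quadrant, the scalar $\mu=\cos\theta$ ranges over $[0,1]$, and this interval is exhausted by the three disjoint cases $\mu=0$, $|\mu|\in(0,1)$, and $|\mu|=1$. Hence every 2-qubit pure state Bloch vector falls into exactly one of the situations covered by Lemmas~\ref{mu=0}, \ref{mu=(0,1)}, and \ref{mu=1}.

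First I would verify that this case partition is exhaustive and non-overlapping for the universe of 2-qubit pure states described by Eqs.~\ref{2-qubit_1}--\ref{2-qubit_3}. Second, I would invoke each lemma in turn: Lemmas~\ref{mu=0} and \ref{mu=1} deliver the equality $I_{\chi^2,post}(\psi)=2$ at the two boundaries (maximally entangled and product states respectively), while Lemma~\ref{mu=(0,1)} supplies the strict inequality $I_{\chi^2,post}(\psi)<2$ in the interior. Merging these yields the uniform bound $I_{\chi^2,post}(\psi)\leq 2$ across the entire pure-state manifold. Third, to confirm that the constant $2$ is sharp rather than merely an upper bound, I would exhibit an attaining witness, e.g. any Bell state (from Lemma~\ref{mu=0}) or any pure product state (from Lemma~\ref{mu=1}), both of which saturate the bound.

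A small bookkeeping item also deserves attention: the degenerate configurations with $\cos^2\phi\cos^2k=0$ or $\cos^2\phi'\cos^2k'=0$ that were excluded from the general Unique Determination argument in Lemma~\ref{Unique determination}. The text already notes that in these special cases the structural constraints tighten further so that only $\omega+\omega'$ remains free and the corresponding reference distributions are all zero, whence $I_{\chi^2,post}(\psi)<2$ follows immediately. I would include a sentence acknowledging this so that the theorem statement is unambiguously global across the whole pure-state manifold.

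The main obstacle has already been absorbed into Lemma~\ref{mu=(0,1)}, which required showing that the three-variable function $f(x_1,y_1,y_2)$ is monotonically decreasing in each of its three arguments and then evaluating at the corner $x_1=y_1=y_2=0$, using along the way the Pythagorean-type identity that produces $2D_{\chi^2}(\mu,0)+2(1-\mu^2)=2$. With that nontrivial convex-analytic step done, the theorem itself is a clean case combination; the only residual concern is making sure the boundary cases really do lie in the closure of the parameter space used to set up the Bloch decomposition, so that the equality $I_{\chi^2,post}(\psi)=2$ is genuinely attained rather than merely approached.
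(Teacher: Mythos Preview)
Your proposal is correct and follows essentially the same approach as the paper: the paper's proof of Theorem~\ref{the upper 2-qubit} consists of a single sentence observing that Lemmas~\ref{mu=0}, \ref{mu=1}, and \ref{mu=(0,1)} together cover all 2-qubit pure-state Bloch vectors. Your additional bookkeeping on exhaustiveness, sharpness (attainment at Bell and product states), and the degenerate configurations $\cos^2\phi\cos^2k=0$ or $\cos^2\phi'\cos^2k'=0$ is more explicit than the paper's own account but fully consistent with it.
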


\section{\label{the info content of n-qubit}The information upper bound of pure state in $n$-qubit system}

We proceed to generalize Theorem~\ref{the upper 2-qubit} to $n$-qubit systems. Owing to the inherent complexity of their analysis, no tractable analytical solution has yet been established. To address this, we formulate a hypothesis and validate it through systematic numerical investigations.Based on this hypothesis, we have found that regardless of the number of qubits (denoted as $n$), the process of calculating information content using our method can always be expressed as a convergent semidefinite programming (SDP) First, we propose the following hypothesis:

\begin{hypothesis}[The posterior information content upper bound]\label{the upper n-qubit}
    The upper bound on the posterior information content of a $n$-qubit pure state system is exactly equal to $n$.
\end{hypothesis}

The process of calculating the 2-qubit information upper bound in Lemma \ref{mu=(0,1)} actually corresponds to a standard Semi-definite Programming (SDP). Then, we propose the following proposition:

\begin{proposition}[Semi-definite Programming]\label{SDP}
    For any positive integer $n$, any (standard) $n$-qubit system, its first $k-1$ parameters $b_i,\;1 \le i \le k - 1 \le {4^n}$ are given in the natural index order of the standard Bloch representation. Assume ${b_k}$ is the $k$th parameter, then the upper bound $U_ {b_k}$ and the lower bound $L_{b_k}$ of parameter ${b_k}$ can be solved by the standard SDP.
    \begin{itemize}
	\item For the lower bound problem:
	\begin{equation}\label{SDP_l}
		\begin{array}{rl}\min\limits& {b_k}\\ \text{subject to}& - 1 \le {b_k}, \cdots ,{b_{{4^n}}} \le 1\\ &Q\succeq0\end{array}
	\end{equation} 
    \end{itemize}
    \begin{itemize}
        \item For the upper bound problem:
	\begin{equation}\label{SDP_u}
		\begin{array}{rl}\min\limits& {-b_k}\\ \text{subject to}& - 1 \le {b_k}, \cdots ,{b_{{4^n}}} \le 1\\ &Q\succeq0\end{array}
	\end{equation}
    \end{itemize}
\end{proposition}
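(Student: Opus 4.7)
The plan is to show that the set of Bloch parameter vectors corresponding to valid $n$-qubit density matrices is cut out by a single Linear Matrix Inequality (LMI), from which the SDP structure of Eqs.~(\ref{SDP_l}) and (\ref{SDP_u}) follows immediately.

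First, I would recall the standard Pauli expansion of an $n$-qubit density matrix:
\begin{equation}
\rho \;=\; \frac{1}{2^n}\sum_{I\in\{0,1,2,3\}^n} b_I\,\sigma_{i_1}\otimes\cdots\otimes\sigma_{i_n},
\end{equation}
with $b_{(0,\ldots,0)}=1$ enforcing $\operatorname{tr}(\rho)=1$ and with hermiticity automatic because the coefficients $b_I$ are real. Relabeling the multi-index in the paper's natural index order gives the Bloch vector $(b_1,\ldots,b_{4^n})$ of Eq.~(\ref{psi}) in the $n$-qubit setting. The crucial observation is that $\rho$ is an affine function of $(b_2,\ldots,b_{4^n})$, so positive semidefiniteness $\rho\succeq 0$ is a bona fide LMI in these variables.

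Second, I would fix $b_1,\ldots,b_{k-1}$ to their given values, absorb them into an affine term $A_0$, and write
\begin{equation}
\rho \;=\; A_0 \;+\; \sum_{j=k}^{4^n} b_j\,A_j, \qquad A_j=\frac{1}{2^n}\sigma_{I(j)},
\end{equation}
where each $A_j$ is a fixed $2^n\times 2^n$ Hermitian matrix. Identifying the SDP matrix $Q$ in the proposition with $\rho$ (optionally block-diagonally augmented by slack matrices that encode the box constraints $-1\le b_j\le 1$), the feasible set becomes an LMI intersected with an affine subspace and a polytope, which is exactly the canonical form of a linear SDP. The objective $b_k$ (respectively $-b_k$) is linear in the free variables. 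Compactness of the feasible set guarantees that the optima are attained, and the Slater condition is satisfied whenever the given prefix $b_1,\ldots,b_{k-1}$ is strictly feasible (e.g.\ the maximally mixed state witnesses this at $k=2$), so standard interior-point methods converge to the true bounds.

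The main obstacle is not computational but conceptual: one must verify that the \emph{only} constraint characterizing a valid Bloch vector is $\rho\succeq 0$ together with the trace normalization $b_1=1$. Hermiticity is built into the real-parameter Pauli expansion, while the box constraints $-1\le b_j\le 1$ follow from $|\operatorname{tr}(\rho\,\sigma_{I(j)})|\le 1$, and every nonlinear-looking structural inequality on Bloch components (such as the Gamel inequalities invoked earlier for the $2$-qubit case) is a corollary of the single LMI $\rho\succeq 0$. Once this identification is made, Proposition~\ref{SDP} reduces to a direct observation about the standard form of a semidefinite program; its content lies in certifying that the feasibility region used in the information-bound computation exactly coincides with the physical state space.
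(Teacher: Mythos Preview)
Your proposal is correct and takes essentially the same approach as the paper: the proof in Appendix~\ref{Appendix D} likewise identifies each measurement operator $A_k$ with the Pauli tensor $\sigma_{I(k)}$, notes that these are Hermitian, and observes that both the objective $b_k=\langle A_k,Q\rangle$ and the constraints are linear in the density matrix $Q$, which places the problem in standard SDP form. Your write-up is a bit more thorough---you make the affine dependence of $\rho$ on the free Bloch coordinates explicit, invoke compactness and Slater, and observe that the box constraints $-1\le b_j\le 1$ are actually redundant consequences of $\rho\succeq 0$---but the underlying idea is the same.
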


\begin{remark}
    Natural index order refers to the sequence of natural numbers with indexes starting from 1. Corresponds to the index of the Bloch components, from $b_1$ to $b_{4n}$.
\end{remark}

\begin{remark}
     Where $Q$ is the density matrix representation transformed by the Bloch parameter representation  \cite{gamel2016entangled}. It is formally expressed as:
\begin{equation}
    Q=\frac{1}{2^n}\sum\limits_{j = 1}^{{4^n}} {{b_j}{T_j}}
\end{equation}
$T_j$ is the local observables used to obtain $b_j$, which is obtained by tensor product of the Pauli operators ${\sigma _i},i \in \left\{ {0,1,2,3} \right\}$. Detailed description is in the remark to Appendix \ref{Appendix D}.
\end{remark}
   
\begin{proof}
    The proof of the proposition is in Appendix \ref{Appendix D}.
\end{proof}

Building on the semidefinite programming (SDP) framework formalized~\cite{vandenberghe1996semidefinite}, we numerically validate Theorem~\ref{the upper 2-qubit} and hypothesis~\ref{the upper n-qubit}. The convexity of the SDP guarantees global optimality of the solutions, thereby ensuring the statistical reliability of our numerical results.

% It is worth emphasizing that optimization based on pure state space structure alone is non-convex and difficult to effectively verify numerically. The formalization of the numerical simulation is based on a pre-given "pure + mixed state" space structure.  Obviously, the upper bound of the posterior information simulated by the SDP formalization of the "pure + mixed state" space is in theory at least not less than the upper bound of the posterior information of the "pure state" space. If the simulation results of the former support the hypothesis, it can be used as evidence that the hypothesis is established. 

It is worth emphasizing that optimization based solely on the pure state space structure is non-convex and challenging to verify numerically. The numerical simulation framework is formalized under a predefined "pure + mixed state" space structure. Theoretically, the upper bound of posterior information obtained through SDP formalization under this hybrid structure is guaranteed to be at least not less than that under the "pure state" formulation. When simulation results from the hybrid approach support the hypothesis, this convergence can serve as empirical evidence for hypothesis validation.

We conducted extensive numerical simulations to estimate the information content upper bound of the quantum system. Three state groups were investigated: (i) 100 randomly generated pure states, (ii) 50 random tensor-product states, and (iii) 50 randomly rotated Bell states. Their respective information contents were computationally quantified. The results of the numerical experiment on 2-qubit are shown in Fig.~\ref{Fig.1}.

\begin{figure*}[htbp!]
	\subfloat[Information content of random pure states]{
	\includegraphics[width=0.3\columnwidth]{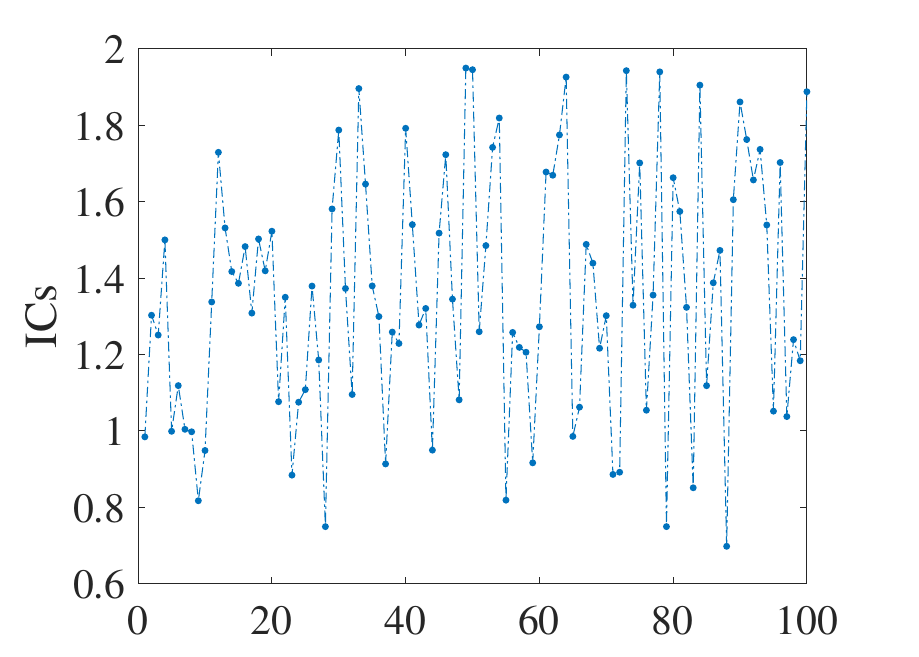}            \label{Fig.1(a)}
}
	\subfloat[Information content of random tensor product state]{
	\includegraphics[width=0.3\columnwidth]{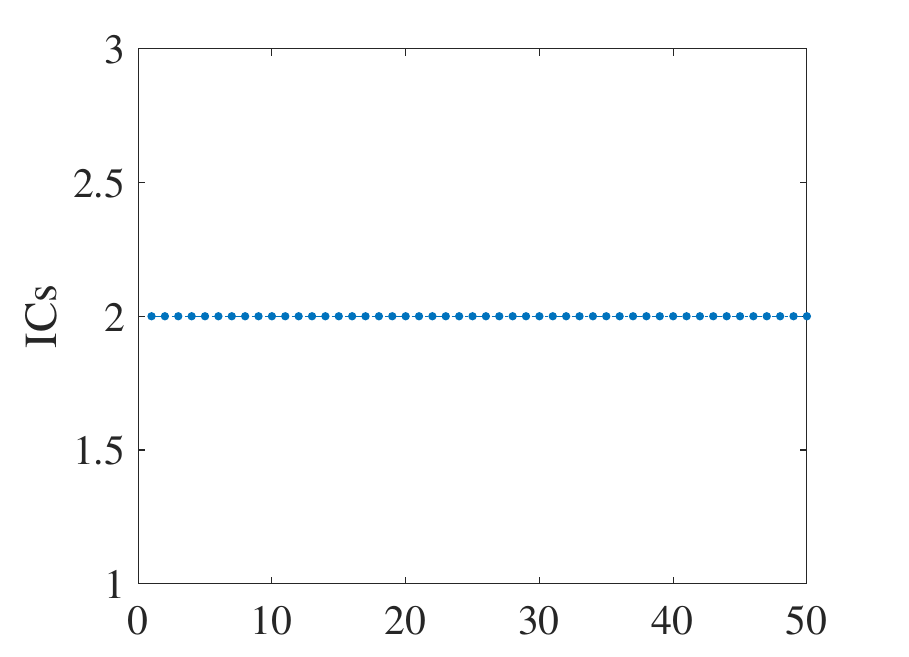}             \label{Fig.1(b)}
}
	\subfloat[Information content of random Bell states]{
		\includegraphics[width=0.3\columnwidth]{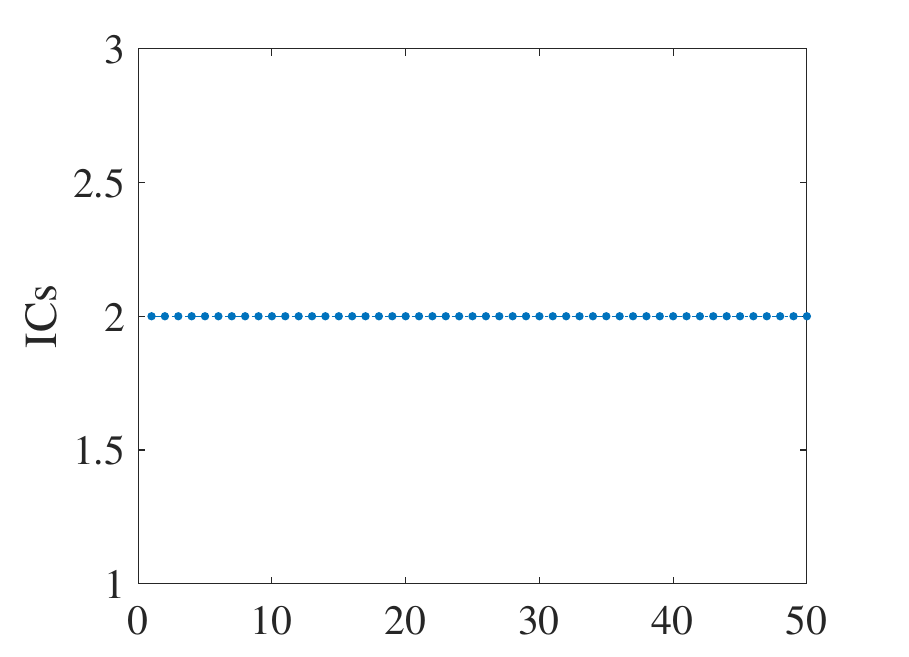} \label{Fig.1(c)}
	}	
\caption{The information content contained in various types of quantum states in a 2-qubit system. }\label{Fig.1}
\end {figure*}

\begin{figure*}[htbp!]
	\subfloat[Information content of random pure states]{
	\includegraphics[width=0.3\columnwidth]{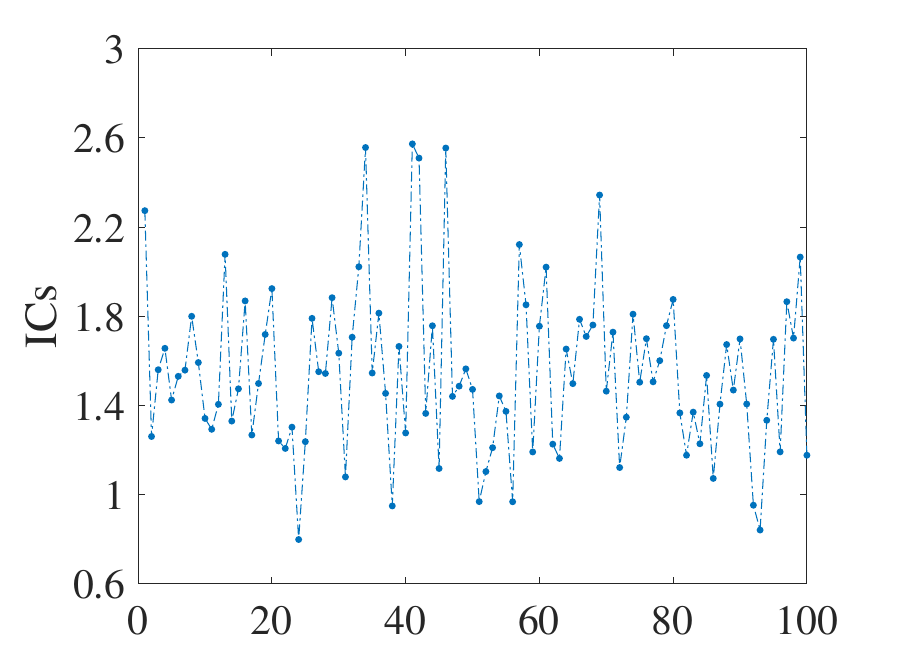} \label{Fig.2(a)}
}
	\subfloat[Information content of random tensor product states]{
	\includegraphics[width=0.3\columnwidth]{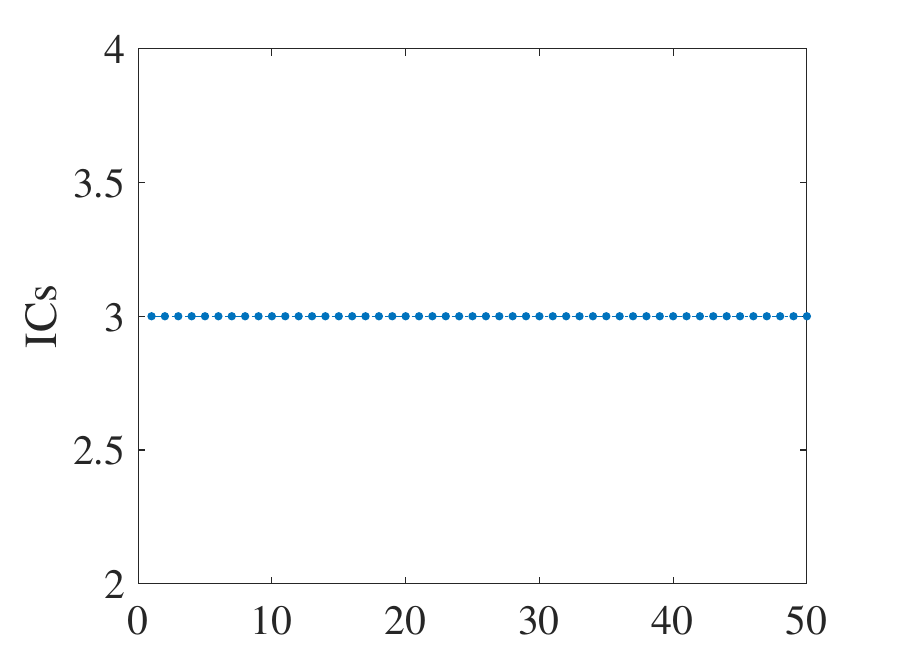} \label{Fig.2(b)}
}
	\subfloat[Information content of random GHZ states]{
		\includegraphics[width=0.3\columnwidth]{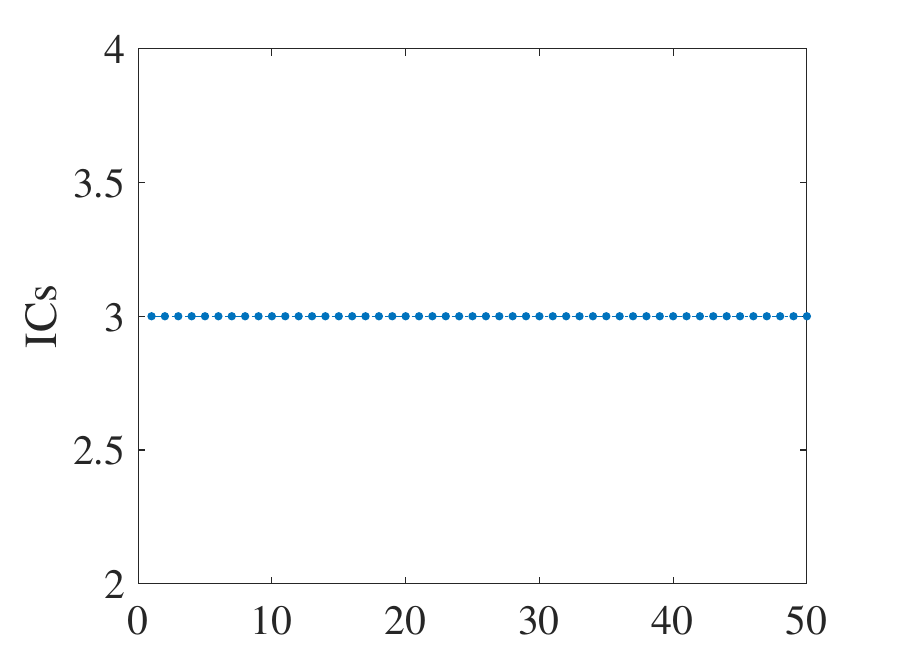} \label{Fig.2(c)}
	}	
\caption{The information content contained in various types of quantum states in a 3-qubit system. }\label{Fig.2}
\end {figure*}

Numerical experiments calculate the information content according to natural index order. In Fig.\ref{Fig.1}, we can see that any state of different categories is consistent with the conclusion in Theorem \ref{the upper 2-qubit}. In particular, in Fig.\ref{Fig.1(b)} we can see that when the system is a tensor product state, its information content is exactly equal to 2. At this time, the calculation result corresponds to the situation in the lemma \ref{mu=1}. Bell states serve as a typical example of quantum entanglement. We randomly select one of the four Bell states for calculation and the result is shown in Fig.\ref{Fig.1(c)}, and its information amount is exactly equal to 2. This corresponds to the situation in Lemma \ref{mu=0}. 

In the case of the 3-qubit quantum system, we also selected 100 random pure states, 50 random tensor product states and 50 random rotated three-valued Greenberger-Horne-Zeilinger (GHZ) states. The results of the numerical experiment are shown in Fig.\ref{Fig.2}.

Comparing the case of 2-qubit systems, we obtained similar conclusions. In other words, the total amount of information contained in random quantum pure states is no more than 3 (that is shown in Fig.\ref{Fig.2(a)}). When the selected state is a tensor product state or a 3-qubit GHZ state, we can always accurately obtain that the system information amount is exactly equal to 3 (shown in Fig.\ref{Fig.2(b)} and \ref{Fig.2(c)}). 

In fact, we have done many sets of numerical experiments, all of which have given the same conclusion. According to the Hypothesis, we found that regardless of the number of quantum bits $n$, the process of calculating the information content using our method can always be represented as an SDP that converges to $n$. Based on the conclusion given by the proposition \ref{SDP}, we generalized the Hypothesis \ref{the upper 2-qubit} to $n$-qubit system.

We believe that Hypothesis \ref{the upper n-qubit} not only clarifies the consistency between the quantum world and the classical world, but is also essential for the related development of quantum computing. Specifically, Shannon's noiseless coding theorem \cite{cover1999elements} states that the upper bound of the expected coding length $L_*$ of the optimal coding scheme for any given information source (probability distribution $D$) satisfies
\begin{equation}
     L_*\leq S(D)+1 .
\end{equation}
where $S(D)$ is the Shannon entropy of $D$. 

The upper bound hypothesis of posterior information proposed in this paper also provides a similar quantitative bound: \textbf{In order to determine the state of any unknown $n$-qubit system, the upper bound of the posterior information required by optimal measurement process is $n$ classic bits.} 

\section{\label{conclusion}Conclusion and outlook}

We have analyzed the posterior information and provided the information content contained in the parameters of the quantum state from an information divergence perspective. Based on the structural constraints of the quantum state, we obtained the upper bound of the posterior information content for the 2-qubit system, which equals 2. The SDP-based numerical analysis enables the extension of our findings to arbitrary $n$-qubit systems, from which we formulated the hypothesis that the upper bound on the posterior information content for an $n$-qubit system is $n$ (Hypothesis~\ref{the upper n-qubit}).

This work provides important theoretical evidence for revealing an elegant consistency between classical and quantum systems in the information-theoretic framework. If one defines the posterior information of a classical-bit system in the same way as in the quantum case, the posterior information content of a $n$-classical-bit system is also exactly equal to $n$. The posterior information upper bound hypothesis (Hypothesis \ref{the upper n-qubit}) explicitly shows that the composite system constructed by quantum theory can ensure that the information requirements will not lose control due to the exponentially increase of parameters. The composite quantum system as an information carrier can still be constrained by basic information processing principles. 

Existing upper bounds on quantum state information (such as the Holevo bound~\cite{holevo1973bounds} and \v{C}aslav Brukner et al.~\cite{PhysRevLett.83.3354}) are all obtained based on specific measurement setups. On the other hand, both the parameter scale and the ontic embedding are intrinsic properties of quantum states~\cite{pusey2012reality}. These works seem to suggest that: although the intrinsic information content of quantum states is likely $super-O(n)$ in scale, the quantum measurement process fundamentally limits our ability to acquire the internal information of quantum states, such that the information that can be leaked by measurement results is only $O(n)$. Our analysis indicates that, due to the high degree of correlation among parameters, the intrinsic information content of quantum states is also $O(n)$. The same explanation may also apply to the ontic embedding, though we have not explicitly presented a mathematical model for this explanation in the paper.

Especially in the realm of quantum state tomography, we noticed that Hypothesis \ref{the upper n-qubit} has very striking connection with the work done by Huang et al., which related the classical and quantum worlds\cite{huang2020predicting}. They proposed an efficient method by using "classical shadow", which shows that no matter how the dimension of the quantum state is, the $M$ scalar parameters defined on the quantum state can be accurately predicted with a high success rate using only $O(\log M)$ measurements. In other words, for a quantum state of $n$ qubits which is characterized by $M=2^n$ parameters, if one wants to estimate this state using diverse methods, at least $O(n)$ measurements are required. Our posterior information upper bound hypothesis explains Huang's result in terms of the information content of quantum states: An $n$-dimensional quantum system contains at most $n$ classical bits of the posterior information. As long as each measurement can obtain $O(1)$ classical information in a statistical sense, then only $O(n)$ measurements can be used to arbitrarily approximate the measured quantum state. In summary, the posterior information upper bound hypothesis gives a asymptotic lower bound on the minimum number of local measurements required to estimate the state of a quantum system. Similar to the Shannon limit~\cite{shannon1948mathematical} that establishes the upper bound on the communication efficiency of a noisy channel, our posterior information upper bound establishes the asymptotic limit on the efficiency of generalized quantum tomography.

\section*{Acknowledgements}
This work is funded in part by the National Natural Science Foundation of China (61876129) and the European Union's Horizon 2020 research and innovation programme under the Marie Skodowska-Curie grant agreement No. 721321.

\bibliographystyle{quantum}
\bibliography{quantum}

\begin{thebibliography}{10}

\bibitem{pusey2012reality}
Matthew~F Pusey, Jonathan Barrett, and Terry Rudolph.
\newblock ``On the reality of the quantum state''.
\newblock \href{https://dx.doi.org/https://doi.org/10.1038/nphys2309}{Nature Physics {\bf 8}, 475--478}~(2012).

\bibitem{bohr1928quantenpostulat}
Niels Bohr.
\newblock ``Das quantenpostulat und die neuere entwicklung der atomistik''.
\newblock \href{https://dx.doi.org/https://doi.org/10.1007/bf01504968}{Naturwissenschaften {\bf 16}, 245--257}~(1928).

\bibitem{heisenberg1927anschaulichen}
Werner Heisenberg.
\newblock ``{\"U}ber den anschaulichen inhalt der quantentheoretischen kinematik und mechanik''.
\newblock \href{https://dx.doi.org/https://doi.org/10.1007/bf01397280}{Zeitschrift f{\"u}r Physik {\bf 43}, 172--198}~(1927).

\bibitem{fuchs2013quantum}
Christopher~A Fuchs and R{\"u}diger Schack.
\newblock ``Quantum-bayesian coherence''.
\newblock \href{https://dx.doi.org/https://doi.org/10.1103/revmodphys.85.1693}{Reviews of modern physics {\bf 85}, 1693--1715}~(2013).

\bibitem{wheeler2018information}
John~Archibald Wheeler.
\newblock ``Information, physics, quantum: The search for links''.
\newblock \href{https://dx.doi.org/https://doi.org/10.1201/9780429500459-19}{Feynman and computationPages 309--336}~(2018).

\bibitem{nielsen2010quantum}
Michael~A Nielsen and Isaac~L Chuang.
\newblock ``Quantum computation and quantum information''.
\newblock \href{https://dx.doi.org/https://doi.org/10.1119/1.1463744}{Cambridge university press}. ~(2010).

\bibitem{holevo1973bounds}
Alexander~Semenovich Holevo.
\newblock ``Bounds for the quantity of information transmitted by a quantum communication channel''.
\newblock Problemy Peredachi Informatsii {\bf 9}, 3--11~(1973).
\newblock  url:~\url{https://api.semanticscholar.org/CorpusID:118312737}.

\bibitem{PhysRevLett.83.3354}
\ifmmode \check{C}\else~\v{C}\fi{}aslav Brukner and Anton Zeilinger.
\newblock ``Operationally invariant information in quantum measurements''.
\newblock \href{https://dx.doi.org/https://doi.org/10.1103/PhysRevLett.83.3354}{Phys. Rev. Lett. {\bf 83}, 3354--3357}~(1999).

\bibitem{shang2025operator}
Cheng Shang, Zhi-Guang Lu, Hayato Kinkawa, and Tomotaka Kuwahara.
\newblock ``Operator spreading and information propagation: Equivalence and beyond''~(2025).
\newblock  \href{http://arxiv.org/abs/2505.07955}{arXiv:2505.07955}.

\bibitem{leifer2014quantum}
Matthew~Saul Leifer.
\newblock ``Is the quantum state real? an extended review of $\psi$-ontology theorems''.
\newblock \href{https://dx.doi.org/http://doi.org/10.12743/quanta.v3i1.22}{Quanta {\bf 3}, 67–155}~(2014).

\bibitem{PhysRevA.77.022104}
A.~Montina.
\newblock ``Exponential complexity and ontological theories of quantum mechanics''.
\newblock \href{https://dx.doi.org/http://doi.org/10.1103/PhysRevA.77.022104}{Phys. Rev. A {\bf 77}, 022104}~(2008).

\bibitem{PhysRevA.83.032107}
Alberto Montina.
\newblock ``State-space dimensionality in short-memory hidden-variable theories''.
\newblock \href{https://dx.doi.org/http://doi.org/10.1103/PhysRevA.83.032107}{Phys. Rev. A {\bf 83}, 032107}~(2011).

\bibitem{hardy2004quantum}
Lucien Hardy.
\newblock ``Quantum ontological excess baggage''.
\newblock \href{https://dx.doi.org/https://doi.org/10.1016/j.shpsb.2003.12.001}{Studies in History and Philosophy of Science Part B: Studies in History and Philosophy of Modern Physics {\bf 35}, 267--276}~(2004).

\bibitem{brukner2003information}
{\v{C}}aslav Brukner and Anton Zeilinger.
\newblock ``Information and fundamental elements of the structure of quantum theory''.
\newblock \href{https://dx.doi.org/10.1007/978-3-662-10557-3_21}{Pages 323--354}.
\newblock Springer Berlin Heidelberg. Berlin, Heidelberg~(2003).

\bibitem{brukner2009information}
{\v{C}}aslav Brukner and Anton Zeilinger.
\newblock ``Information invariance and quantum probabilities''.
\newblock \href{https://dx.doi.org/https://doi.org/10.1007/s10701-009-9316-7}{Foundations of Physics {\bf 39}, 677--689}~(2009).

\bibitem{brukner_2025_caehe-cfg98}
C.~Brukner, A.~Zeilinger, and Austrian Science Foundation~FWF (Austria).
\newblock ``Malus' law and quantum information''.
\newblock Acta Physica Slovaca{\bf 49}~(2025).
\newblock  url:~\url{https://www.osti.gov/etdeweb/biblio/20122084}.

\bibitem{PhysRevLett.88.130401}
J.~\ifmmode \check{R}\else \v{R}\fi{}eh\'a\ifmmode~\check{c}\else \v{c}\fi{}ek and Z.~Hradil.
\newblock ``Invariant information and quantum state estimation''.
\newblock \href{https://dx.doi.org/10.1103/PhysRevLett.88.130401}{Phys. Rev. Lett. {\bf 88}, 130401}~(2002).

\bibitem{Foundations2002}
Ernesto~F. Galvão.
\newblock ``Foundations of quantum theory and quantum information applications''.
\newblock Phd thesis.
\newblock University of Oxford.
\newblock ~(2002).
\newblock  url:~\url{https://search.worldcat.org/zh-cn/title/53859428}.

\bibitem{Shafiee2006}
A.~Shafiee, F.~Safinejad, and F.~Naqsh.
\newblock ``Information and the {Brukner-Zeilinger} interpretation of quantum mechanics: {A} critical investigation''.
\newblock \href{https://dx.doi.org/https://doi.org/10.1007/s10702-006-1845-0}{Foundations of Physics Letters {\bf 19}, 1--20}~(2006).

\bibitem{doi:10.1098/rspa.2015.0435}
A.~E. Rastegin.
\newblock ``On the brukner–zeilinger approach to information in quantum measurements''.
\newblock \href{https://dx.doi.org/10.1098/rspa.2015.0435}{Proceedings of the Royal Society A: Mathematical, Physical and Engineering Sciences {\bf 471}, 20150435}~(2015).
\newblock  \href{http://arxiv.org/abs/https://royalsocietypublishing.org/doi/pdf/10.1098/rspa.2015.0435}{arXiv:https://royalsocietypublishing.org/doi/pdf/10.1098/rspa.2015.0435}.

\bibitem{kalev2015quantum}
Amir Kalev, Robert~L Kosut, and Ivan~H Deutsch.
\newblock ``Quantum tomography protocols with positivity are compressed sensing protocols''.
\newblock \href{https://dx.doi.org/https://doi.org/10.1038/npjqi.2015.18}{npj Quantum Information {\bf 1}, 1--6}~(2015).

\bibitem{lloyd2014quantum}
Seth Lloyd, Masoud Mohseni, and Patrick Rebentrost.
\newblock ``Quantum principal component analysis''.
\newblock \href{https://dx.doi.org/https://doi.org/10.1038/nphys3029}{Nature Physics {\bf 10}, 631--633}~(2014).

\bibitem{Cavalcanti2022postquantum}
Paulo~J. Cavalcanti, John~H. Selby, Jamie Sikora, Thomas~D. Galley, and Ana~Bel{\'e}n Sainz.
\newblock ``Post-quantum steering is a stronger-than-quantum resource for information processing''.
\newblock \href{https://dx.doi.org/https://doi.org/10.1038/s41534-022-00574-8}{npj Quantum Information {\bf 8}, 76}~(2022).

\bibitem{gamel2016entangled}
Omar Gamel.
\newblock ``Entangled bloch spheres: Bloch matrix and two-qubit state space''.
\newblock \href{https://dx.doi.org/https://doi.org/10.1103/physreva.93.062320}{Physical Review A {\bf 93}, 062320}~(2016).

\bibitem{PhysRevA.63.022113}
\ifmmode \check{C}\else~\v{C}\fi{}aslav Brukner and Anton Zeilinger.
\newblock ``Conceptual inadequacy of the shannon information in quantum measurements''.
\newblock \href{https://dx.doi.org/https://doi.org/10.1103/PhysRevA.63.022113}{Phys. Rev. A {\bf 63}, 022113}~(2001).

\bibitem{Görnitz2003}
Thomas G{\"o}rnitz and Otfried Ischebeck.
\newblock ``An introduction to carl friedrich von weizs{\"a}cker's program for a reconstruction of quantum theory''.
\newblock \href{https://dx.doi.org/10.1007/978-3-662-10557-3_17}{Pages 263--279}.
\newblock Springer Berlin Heidelberg. Berlin, Heidelberg~(2003).

\bibitem{PhysRevLett.113.020402}
Sania Jevtic, Matthew Pusey, David Jennings, and Terry Rudolph.
\newblock ``Quantum steering ellipsoids''.
\newblock \href{https://dx.doi.org/https://doi.org/10.1103/PhysRevLett.113.020402}{Phys. Rev. Lett. {\bf 113}, 020402}~(2014).

\bibitem{renyi1961measures}
Alfr{\'e}d R{\'e}nyi.
\newblock ``On measures of entropy and information''.
\newblock In Proceedings of the Fourth Berkeley Symposium on Mathematical Statistics and Probability, Volume 1: Contributions to the Theory of Statistics.
\newblock Volume~4, pages 547--562.
\newblock University of California Press~(1961).
\newblock  url:~\url{https://api.semanticscholar.org/CorpusID:123056571}.

\bibitem{popescu2016bounds}
Pantelimon~G Popescu, Sever~S Dragomir, EMIL~I Slu{\c{s}}anschi, and OCTAVIAN~N St{\u{a}}n{\u{a}}{\c{s}}il{\u{a}}.
\newblock ``Bounds for kullback-leibler divergence''.
\newblock Electronic Journal of Differential Equations{\bf 2016}~(2016).
\newblock  url:~\url{https://ejde.math.txstate.edu/Volumes/2016/237/popescu.pdf}.

\bibitem{dragomir2002upper}
SS~Dragomir.
\newblock ``Upper and lower bounds for csisz{\'a}r f-divergence in terms of hellinger discrimination and applications''.
\newblock In Nonlinear Analysis Forum.
\newblock Volume~7, pages 1--14.
\newblock ~(2002).
\newblock  url:~\url{http://melba.vu.edu.au/~rgmia/SSDragomirWeb.html}.

\bibitem{LikelihoodConsistency}
Abraham Wald.
\newblock ``Note on the consistency of the maximum likelihood estimate''.
\newblock The Annals of Mathematical Statistics {\bf 20}, 595--601~(1949).
\newblock  url:~\url{http://www.jstor.org/stable/2236315}.

\bibitem{kullback1997information}
Solomon Kullback.
\newblock ``Information theory and statistics''.
\newblock \href{https://dx.doi.org/https://doi.org/10.1016/0016-0032(59)90368-0}{Courier Corporation}. ~(1997).

\bibitem{Amari2016}
Shun-ichi Amari.
\newblock ``Information geometry and its applications''.
\newblock \href{https://dx.doi.org/https://doi.org/10.1007/978-4-431-55978-8}{Pages XIII, 374}.
\newblock Applied Mathematical Sciences. Springer Tokyo. ~(2016).
\newblock 1st edition.

\bibitem{vandenberghe1996semidefinite}
Lieven Vandenberghe and Stephen Boyd.
\newblock ``Semidefinite programming''.
\newblock \href{https://dx.doi.org/https://doi.org/10.1137/1038003}{SIAM review {\bf 38}, 49--95}~(1996).

\bibitem{cover1999elements}
Thomas~M Cover.
\newblock ``Elements of information theory''.
\newblock \href{https://dx.doi.org/https://doi.org/10.1002/047174882x}{John Wiley \& Sons}. ~(1999).

\bibitem{huang2020predicting}
Hsin-Yuan Huang, Richard Kueng, and John Preskill.
\newblock ``Predicting many properties of a quantum system from very few measurements''.
\newblock \href{https://dx.doi.org/https://doi.org/10.1038/s41567-020-0932-7}{Nature Physics {\bf 16}, 1050--1057}~(2020).

\bibitem{shannon1948mathematical}
Claude~Elwood Shannon.
\newblock ``A mathematical theory of communication''.
\newblock \href{https://dx.doi.org/https://doi.org/10.1002/j.1538-7305.1948.tb00917.x}{The Bell system technical journal {\bf 27}, 379--423}~(1948).

\bibitem{Masanes_2011}
Lluís Masanes and Markus~P Müller.
\newblock ``A derivation of quantum theory from physical requirements''.
\newblock \href{https://dx.doi.org/https://doi.org/10.1088/1367-2630/13/6/063001}{New Journal of Physics {\bf 13}, 063001}~(2011).

\bibitem{bohr2010atomic}
Niels Bohr.
\newblock ``Atomic physics and human knowledge''.
\newblock \href{https://dx.doi.org/https://doi.org/10.1016/0016-0032(58)90827-5}{Courier Dover Publications}. ~(2010).

\bibitem{zhang2022application}
Chenguang Zhang.
\newblock ``The application of quantum-like mechanisms in pattern recognition and model selection(in chinese)''.
\newblock \href{https://dx.doi.org/https://doi.org/10.27356/d.cnki.gtjdu.2022.004252}{PhD thesis}.
\newblock College of Intelligence and Computing, Tianjin University.
\newblock Tianjin~(2022).

\bibitem{6773024}
C.~E. Shannon.
\newblock ``A mathematical theory of communication''.
\newblock \href{https://dx.doi.org/https://doi.org/10.1002/j.1538-7305.1948.tb01338.x}{The Bell System Technical Journal {\bf 27}, 379--423}~(1948).

\bibitem{1320776d-9e76-337e-a755-73010b6e4b64}
S.~Kullback and R.~A. Leibler.
\newblock ``On information and sufficiency''.
\newblock The Annals of Mathematical Statistics {\bf 22}, 79--86~(1951).
\newblock  url:~\url{http://www.jstor.org/stable/2236703}.

\bibitem{AmariNagaoka2000}
Shun'ichi Amari and Hiroshi Nagaoka.
\newblock ``Methods of information geometry''.
\newblock \href{https://dx.doi.org/https://doi.org/10.1090/mmono/191}{Volume 191 of Translations of Mathematical Monographs}.
\newblock American Mathematical Society. ~(2000).

\bibitem{Stirling}
William Feller.
\newblock ``A direct proof of stirling's formula''.
\newblock The American Mathematical Monthly {\bf 74}, 1223--1225~(1967).
\newblock  url:~\url{http://www.jstor.org/stable/2315671}.

\bibitem{ElementsofInformationTheory}
Thomas~M. Cover and Joy~A. Thomas.
\newblock ``Entropy, relative entropy, and mutual information''.
\newblock \href{https://dx.doi.org/https://doi.org/10.1002/047174882X.ch2}{Chapter~2, pages 13--55}.
\newblock John Wiley \& Sons, Ltd. ~(2005).
\newblock  \href{http://arxiv.org/abs/https://onlinelibrary.wiley.com/doi/pdf/10.1002/047174882X.ch2}{arXiv:https://onlinelibrary.wiley.com/doi/pdf/10.1002/047174882X.ch2}.

\bibitem{RemarkStirling'sFormula}
Herbert Robbins.
\newblock ``A remark on stirling's formula''.
\newblock The American Mathematical Monthly {\bf 62}, 26--29~(1955).
\newblock  url:~\url{http://www.jstor.org/stable/2308012}.

\end{thebibliography}

\onecolumn

\appendix

\section{\label{Appendix E}The general form of a quantum pure state in Bloch representation}

The Bloch formalization is equivalent to the density operator formalization, and its advantage is that it provides a more geometrically intuitive representation in low-dimensional cases and can more conveniently represent and analyze quantum correlations. The definition of the Bloch representation relies on the observables derived from a specific fiducial measurement. Typically, for a quantum bit, the following three Pauli observables (matrices) can be used: 
\begin{equation}    \sigma_1=\begin{pmatrix}0&1\\1&0\end{pmatrix}\quad\sigma_2=\begin{pmatrix}0&-i\\i&0\end{pmatrix}\quad\sigma_3=\begin{pmatrix}1&0\\0&-1\end{pmatrix}
\end{equation}
For any 2-qubit composite system, if its density operator is expressed as $\rho$, then its Bloch representation can be expressed as $\ \psi =(\boldsymbol{\alpha},\boldsymbol{\beta},C)$. Among them, $\boldsymbol{\alpha} $ and $\boldsymbol{\beta}$ are 
\begin{equation}
    \boldsymbol{\alpha}=\begin{pmatrix}\operatorname{tr}((\sigma_1\otimes I)\rho)\\\operatorname{tr}((\sigma_2\otimes I)\rho)\\\operatorname{tr}((\sigma_3\otimes I)\rho)\end{pmatrix}, \boldsymbol{\beta}=\begin{pmatrix}\operatorname{tr}((I\otimes\sigma_1)\rho)\\\operatorname{tr}((I\otimes\sigma_2)\rho)\\\operatorname{tr}((I\otimes\sigma_3)\rho)\end{pmatrix}
\end{equation}

and $C$ is a 3 × 3 correlation matrix defined as
\begin{equation}
    C=\begin{pmatrix}\operatorname{tr}((\sigma_1\otimes\sigma_1)\rho)&\operatorname{tr}((\sigma_1\otimes\sigma_2)\rho)&\operatorname{tr}((\sigma_1\otimes\sigma_3)\rho)\\\operatorname{tr}((\sigma_2\otimes\sigma_1)\rho)&\operatorname{tr}((\sigma_2\otimes\sigma_2)\rho)&\operatorname{tr}((\sigma_2\otimes\sigma_3)\rho)\\\operatorname{tr}((\sigma_3\otimes\sigma_1)\rho)&\operatorname{tr}((\sigma_3\otimes\sigma_2)\rho)&\operatorname{tr}((\sigma_3\otimes\sigma_3)\rho)\end{pmatrix}
\end{equation}
In Bloch representation, the elements in $\alpha$ are usually defined as $b_2-b_4$, the elements in $\beta$ are defined as $b_5-b_7$, and the elements in $C$ are defined as $b_8-b_{16}$ in row-major order. In addition, $b_1 = \operatorname{tr}((I\otimes I)\rho)=1 $.

For the detailed proof of Eq.\ref{2-qubit_1}-\ref{2-qubit_3} in the main text, please refer to the Ref.\cite{gamel2016entangled}. We only give a general idea. There are unitary transformations between any pure states that allow them to be converted to each other, which means that all pure states can be generated through unitary transformations based on direct product states. In addition, local transformations do not change the form of the diagonal matrix of $C$ after SVD decomposition. This diagonal matrix can only be controlled by non-local transformations. Non-local transformations have the following general form:
\begin{equation}
    \tilde{U}(\theta_1,\theta_2,\theta_3)=\tilde{U}_1(\theta_1)\tilde{U}_2(\theta_2)\tilde{U}_3(\theta_3)
\end{equation}
where
\begin{equation}
    \tilde{U}_j(\theta_j)=\exp\left(\frac{\theta_j}{2}\sigma_j\otimes\sigma_j\right)
\end{equation}
 
Acting on the pure state of the direct product of $\boldsymbol{\alpha}=(1,0,0)^{\mathrm{T}}$ and $\boldsymbol{\beta}=(1,0,0)^{\mathrm{T}}$, we can obtain

\begin{equation}
    \label{ABC}\boldsymbol{\alpha}=\begin{pmatrix}\cos\theta\\0\\0\end{pmatrix},\boldsymbol{\beta}=\begin{pmatrix}\cos\theta\\0\\0\end{pmatrix},C=\left(\begin{array}{ccc}1&0&0\\0&0&\sin\theta\\0&\sin\theta&0\end{array}\right )
\end{equation}
where $\theta=\theta_2-\theta_3$. 

Note that for the sake of discussion, the form of the pure state given by Eq.\ref{2-qubit_1}-\ref{2-qubit_3} is not consistent with the form given by Eq.\ref{ABC}. In fact, as long as the parameters in $O_1$ and $O_2$ are represented as:
\begin{equation}\sin\omega=-1,\cos\varphi=1,\cos k=1\end{equation}
It gets the same form as in Eq.\ref{2-qubit_1}-\ref{2-qubit_3}.

% In the following discussion, we can always restrict $\theta$ to belong to the first quadrant. We can let 

% \begin{equation}
%     \left.O_1=\left(\begin{array}{ccc}-1&0&0\\0&0&\pm1\\0&\pm1&0\end{array}\right.\right), O_2=\left(\begin{array}{ccc}-1&0&0\\0&0&\pm1\\0&\pm1&0\end{array}\right)
% \end{equation}

% It can always be guaranteed that the new $\alpha$, $\beta$ and $C$ and make $\cos \theta \geq 0$ and $\sin \theta \geq 0$.

On this basis, we can calculate that:

\begin{equation}
    \boldsymbol{\alpha}=\left(\begin{array}{c}\cos\theta\cos\phi\cos k\\-\cos\theta\cos\phi\sin k\\\cos\theta\sin\phi\\\end{array}\right)\\,\boldsymbol{\beta}=\left(\begin{array}{c}\cos\theta\cos\phi'\cos k'\\-\cos\theta\cos\phi'\sin k'\\\cos\theta\sin\phi'\end{array}\right)
\end{equation}
In particular, when $\cos \theta =0$, $\boldsymbol{a},\boldsymbol{b}=\boldsymbol{0}$. The parameters of the matrix C can be represented as:

    \begin{equation}
        \begin{split}
            c_{11} &= \cos \phi \cos k \cos \phi' \cos k' \\
            &\quad + \sin \theta \left( \cos \omega \sin k + \sin \omega \sin \phi \cos k \right) 
            \left( \cos \omega' \sin k' + \sin \omega' \sin \phi' \cos k' \right) \\
            &\quad- \sin \theta \left( \sin \omega \sin k - \cos \omega \sin \phi \cos k \right) 
            \left( \sin \omega' \sin k' - \cos \omega' \sin \phi' \cos k' \right) \\
            &= \cos \phi \cos k \cos \phi' \cos k' \\
            &\quad- \sin \theta \sqrt{1 - \cos^2 \phi \cos^2 k} 
            \sqrt{1 - \cos^2 \phi' \cos^2 k'} 
            \cos(\omega + \omega' + \omega_1 + \omega_1')
        \end{split}
    \end{equation}

    \begin{equation}
    \begin{split}
        c_{12} &= -\cos \phi \cos k \cos \phi' \sin k' \\
        &\quad + \sin \theta \left( \cos \omega \sin k + \sin \omega \sin \phi \cos k \right) 
        \left( \cos \omega' \cos k' - \sin \omega' \sin \phi' \sin k' \right) \\
        &\quad - \sin \theta \left( \sin \omega \sin k - \cos \omega \sin \phi \cos k \right) 
        \left( \sin \omega' \cos k' + \cos \omega' \sin \phi' \sin k' \right) \\
        &= -\cos \phi \cos k \cos \phi' \sin k' \\
        &\quad + \sin \theta \sqrt{1 - \cos^2 \phi \cos^2 k} 
        \sqrt{1 - \cos^2 \phi' \sin^2 k'} 
        \sin(\omega + \omega' + \omega_1 + \omega_2')
    \end{split}
\end{equation}
where
\begin{equation}
   \resizebox{.98\hsize}{!}{$\begin{aligned}&\sin\omega_1=\frac{\sin k}{\sqrt{\sin^2k+\sin^2\phi\cos^2k}},\sin\omega_1^{'}=\frac{\sin k^{'}}{\sqrt{\sin^2k+\sin^2\phi^{'}\cos^2k^{'}}},\sin\omega_2^{'}=\frac{\cos k^{'}}{\sqrt{\cos^2k^{'}+\sin^2\phi^{'}\sin^2k^{'}}}\\&\cos\omega_1=\frac{\sin\phi\text{cos}k}{\sqrt{\sin^2k+\sin^2\phi\text{cos}^2k}},\cos\omega_1^{'}=\frac{\sin\phi^{'}\cos k^{'}}{\sqrt{\sin^2k^{'}+\sin^2\phi^{'}\cos^2k^{'}}},\cos\omega_2^{'}=\frac{\sin\phi^{'}\sin k^{'}}{\sqrt{\cos^2k^{'}+\sin^2\phi^{'}\sin^2k^{'}}}\end{aligned}$} 
\end{equation}

\section{\label{Appendix A}The prior information content of 2-qubit pure state}

In order to clarify the information signification of pure state 2-quantum system, we will give a specific definition of prior information content. For the pure state 2-qubit Bloch vector represented by Eq.\ref{psi}, its prior $\chi^2$ divergence information is defined as:
\begin{equation}\label{proi info}
   I_{\chi^{2},prior}(\psi)=\sum_{i=2}^{16}D_{\chi^{2}}(\psi_{i},0)
\end{equation}
where 0 is reference distribution of the prior $\chi^2$ divergence information content \cite{Masanes_2011}.

\begin{remark}
    Each parameter is estimated using a trivial prior (equivalent to maximum likelihood estimation (MLE). The calculation of information also uses the median point of the trivial prior as the reference distribution. Here, the trivial prior refers to the uniform distribution on the value interval $[-1,1]$ of the Bloch parameter.
\end{remark}

The asymptotic upper bound of the prior information of an $n$-qubit system is equal to the square of the modulus of the $n$-qubit pure-state Bloch representation, where asymptotic means that the number of samples tends to infinity. If the asymptotic requirement is removed, the maximum prior information of an $n$-qubit system can be $4^n$ (corresponding to the case where the number of samples is equal to 1).
% 作为对照，我们将 AZ 论文中定义的信息测度转述与此。考虑一个有n个可能结果的静态实验安排，已知结果存在概率~p =（p1，...，pj，.，pn）。显然，由于π i = 1，并非概率空间中的所有向量都是可能的。实际上，当所有pi都相等（pi = 1/n）时，给出了~p的最小长度。

As a comparison, we paraphrase the quantum information measure defined by Caslav Brukner and Anton Zeilinger\cite{PhysRevLett.83.3354}. Consider a static experimental arrangement with $n$ possible outcomes, with known probability $\vec{p}=(p_{1},...,p_{j},...,p_{n})$. 
Obviously, since $\sum_ip_i=1$, not all vectors in the probability space are possible. In fact, this gives the minimum length of $\vec{p}$ when all probability are equal $(p_i = 1/n)$. In this case, the information measure is defined as:
\begin{equation}
    I(\vec{p})=\mathcal{N}\sum_{i=1}^n\left(p_i-\frac{1}{n}\right)^2.
\end{equation}
When encoding a maximum of $k$ bits of information, that is, $n = 2^k$. The normalization is $\mathcal{N}=2^{k}k/(2^{k}-1)$. As a representation of prior information content, it is clear that $(p_i = 1/n)$ corresponds to the case where the reference distribution in Eq.\ref{proi info} is equal to 0. Therefore, over the complete set of $m$ complementary experiments \cite{bohr2010atomic}, the total information content of the system can be defined as:
\begin{equation}
    I_{total}=\sum_{j=1}^{m}I_{j}(\vec{p})
\end{equation}

It needs to be emphasized that the difference between prior information content and posterior information content is that the dependencies between the parameters of a quantum state are not considered when calculation the prior information content. In other words, even if the values of the first-$k$ Bloch parameters of a quantum state have been identified by measurements, those values will not be used to determine the value range and information content of the $k+1th$ parameter (i.e., we assume a prior that the value range of all parameters belong to [-1,1]). In contrast, the calculation of the posterior information content needs to take into account the dependencies between parameters.

\section{\label{Appendix B}Pythagorean properties of $\chi^2$ divergence}

\begin{lemma}[Pythagorean properties]
    Assume that $t_1$,$t_2$,$t_3$ are components of the Bloch vector, and $t_1^2 = t_2^2 + t_3^2$. then:
\begin{equation}
    D_{X^2}(t_1,0)\equiv D_{X^2}(t_2,0)+D_{X^2}(t_3,0)
\end{equation}
\end{lemma}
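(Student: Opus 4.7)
The plan is to reduce the identity to a direct algebraic computation by evaluating $D_{\chi^2}(t,0)$ in closed form. Substituting the reference value $h = 0$ into the explicit formula for $D_{\chi^2}$ in Eq.~\ref{IC} gives
\begin{equation}
D_{\chi^2}(t, 0) = \frac{((t+1)/2)^2}{1/2} + \frac{((1-t)/2)^2}{1/2} - 1 = \frac{(t+1)^2 + (1-t)^2}{2} - 1.
\end{equation}
Expanding the numerator yields $2t^2 + 2$, so the whole expression simplifies to $t^2$. Thus $D_{\chi^2}(t,0) = t^2$ for every Bloch component $t \in [-1,1]$.

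Once this closed form is in hand, the Pythagorean relation is immediate: under the hypothesis $t_1^2 = t_2^2 + t_3^2$, one simply reads off
\begin{equation}
D_{\chi^2}(t_1, 0) = t_1^2 = t_2^2 + t_3^2 = D_{\chi^2}(t_2, 0) + D_{\chi^2}(t_3, 0),
\end{equation}
which is the desired identity.

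There is essentially no obstacle to overcome: the only subtlety is confirming that the simplification of $D_{\chi^2}(t,0)$ to $t^2$ is valid for the entire relevant range $t \in [-1,1]$, including the boundary values $t = \pm 1$, where one of the two quotients in Eq.~\ref{IC} formally has a zero numerator (the $0/0$ case does not arise here because the denominator $1/2$ is nonzero when $h=0$). So the lemma really is just the statement that $D_{\chi^2}(\cdot,0)$ is the squared Euclidean length on this coordinate, and the Pythagorean theorem for Bloch components transfers verbatim. I would therefore present the proof in two short lines: evaluate the formula, then invoke the hypothesis.
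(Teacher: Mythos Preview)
Your proof is correct and matches the paper's own argument essentially line for line: the paper also substitutes $h=0$ into Eq.~\ref{IC}, simplifies to $D_{\chi^2}(t,0)=t^2$, and then invokes the hypothesis $t_1^2=t_2^2+t_3^2$.
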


\begin{proof}
    From the relationship between probability and Bloch parameters, It can be given:
    \begin{equation}
        \begin{aligned}
D_{X^{2}}(t_{1},0)& =\frac{(\frac{t_{1}+1}{2})^{2}}{\frac{1}{2}}+\frac{(\frac{1-t_{1}}{2})^{2}}{\frac{1}{2}}-1  \\
&=\frac{(t_{1}+1)^{2}}{2}+\frac{(1-t_{1})^{2}}{2}-1 \\
&=\frac{t_{1}^{2}+2t_{1}+1}{2}+\frac{t_{1}^{2}-2t_{1}+1}{2}-1 \\
&=t_{1}^{2}
\end{aligned}
    \end{equation}
    In the same way, the above calculation can be performed for $t_2$ and $t_3$, and the lemma is proved.  
\end{proof}

\section{\label{Appendix C}The unique determination of pure states in 2-qubit systems}
The Bloch representation of a quantum pure state involves many parameters and they are coupled to each other. This is not conducive to analyze information content. First notice, $\omega+\omega'$ can be discussed as a whole parameter without discussing the values of $\omega$ and $\omega'$ separately. then, We derive the following lemma.

\begin{lemma}[Unique determination]
For a Bloch vector of pure state in 2-qubit system $\psi=(\boldsymbol{\alpha}, \boldsymbol{\beta}, \boldsymbol{C})$ with $\cos^2\phi\cos^2k\neq0$ or $\cos^2\phi'\cos^2k'\neq0$, when $\boldsymbol{\alpha}$, $\boldsymbol{\beta}$ and the first two components $c_{11}$, $c_{12}$ in the matrix $\boldsymbol{C}$ are determined, then the remaining components of this Bloch vector are uniquely determined.
\end{lemma}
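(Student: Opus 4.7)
The plan is to extract every free parameter of the parametric representation~(\ref{2-qubit_1})--(\ref{2-qubit_3}) from the given data $(\boldsymbol{\alpha},\boldsymbol{\beta},c_{11},c_{12})$, after which every remaining entry of $\boldsymbol{C}$ is a fixed trigonometric expression in these parameters and is therefore uniquely determined.

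First I would read off $\mu=\cos\theta$ and the angles $\phi,k,\phi',k'$ from $\boldsymbol{\alpha}$ and $\boldsymbol{\beta}$. Because $\theta$ lies in the first quadrant, the common length $\mu=\|\boldsymbol{\alpha}\|=\|\boldsymbol{\beta}\|$ is unambiguous, and the unit vectors $\boldsymbol{\alpha}/\mu$, $\boldsymbol{\beta}/\mu$ equal $(\cos\phi\cos k,-\cos\phi\sin k,\sin\phi)$ and $(\cos\phi'\cos k',-\cos\phi'\sin k',\sin\phi')$, from which $(\phi,k)$ and $(\phi',k')$ are recovered. Under the hypothesis that at least one of $\cos^2\phi\cos^2 k$ or $\cos^2\phi'\cos^2 k'$ is nonzero, this extraction is unambiguous on at least one side (for example, if $\cos^2\phi'\cos^2 k'\neq 0$, then both $\phi'$ and $k'$ are pinned down by $\boldsymbol{\beta}$); any residual ambiguity on the degenerate side will be cleaned up at the last step by $c_{11},c_{12}$.

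Next I would invoke the key structural observation that in every entry of $\boldsymbol{C}$ the phases $\omega,\omega'$ appear only through their sum $\Omega:=\omega+\omega'$; this is transparent from Eqs.~(\ref{c11})--(\ref{c12}) and from the extended expressions of Appendix~\ref{Appendix E}. After the first step, only this single scalar $\Omega\in[0,2\pi)$ is left. The data $c_{11},c_{12}$ then take the schematic form
\[
c_{11}-a_1 b_1=-R\,\cos(\Omega+A),\qquad c_{12}+a_1 b_2=R'\,\sin(\Omega+B),
\]
with $A=\omega_1+\omega_1'$, $B=\omega_1+\omega_2'$ known constants and $R,R'$ known amplitudes. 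The first equation narrows $\Omega$ to two candidates modulo $2\pi$, and the second selects the correct one, fixing $\Omega$ uniquely. Substituting the now-known values $(\mu,\phi,k,\phi',k',\Omega)$ into the closed-form expressions for $c_{13},c_{21},\ldots,c_{33}$ then yields a unique value for each.

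The main obstacle is the disambiguation in the penultimate step: I would need to verify that the pair $(\cos(\Omega+A),\sin(\Omega+B))$ is genuinely injective in $\Omega$, i.e., that the shift $B-A$ is not congruent to $\pi/2$ modulo $\pi$ and that the amplitudes $R,R'$ do not simultaneously collapse. Both conditions are governed precisely by the hypothesis $\cos^2\phi\cos^2 k\neq 0$ or $\cos^2\phi'\cos^2 k'\neq 0$ together with $\mu\in(0,1)$; the boundary cases $\mu=0$ and $\mu=1$ are already covered separately by Lemmas~\ref{mu=0} and~\ref{mu=1} and do not need to be treated here. Checking, from the defining formulas for $\omega_1,\omega_1',\omega_2'$, that the excluded angular configurations coincide exactly with the hypothesis-excluded cases is where the bulk of the bookkeeping lies.
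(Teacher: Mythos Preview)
Your high-level plan coincides with the paper's: use the parametrisation~(\ref{2-qubit_1})--(\ref{2-qubit_3}) to reduce the remaining freedom to the single phase $\Omega=\omega+\omega'$, and then argue that $(c_{11},c_{12})$ fixes $\Omega$. The execution differs in one substantive respect, and this is where your proposal has a gap. You claim that under the hypothesis the angles $(\phi',k')$ are ``pinned down'' by $\boldsymbol\beta$, but this is not so: from $\boldsymbol\beta/\mu=(\cos\phi'\cos k',\,-\cos\phi'\sin k',\,\sin\phi')$ only $\sin\phi'$ is determined, while the sign of $\cos\phi'$---equivalently the branch $(\phi',k')$ versus $(\pi-\phi',\pi+k')$---is never resolved by $\boldsymbol\beta$ alone, irrespective of the hypothesis. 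The paper does not attempt a canonical extraction. Instead it \emph{enumerates} the four angle tuples compatible with $(\boldsymbol\alpha,\boldsymbol\beta)$, observes that passing from one to another forces $\Omega\mapsto\Omega+\pi$ (since the auxiliary phases $\omega_1+\omega_1'$ and $\omega_1+\omega_2'$ each shift by~$\pi$), and then checks by direct substitution that all four parameter sets yield the \emph{same} matrix $\boldsymbol C$. That invariance step is what replaces your hoped-for unique extraction, and it is the piece your outline is missing.

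Your final paragraph correctly isolates the other analytic issue---the injectivity of $\Omega\mapsto\bigl(\cos(\Omega+A),\sin(\Omega+B)\bigr)$---and you are right that this is where the trigonometric bookkeeping lies. Be warned, though, that the equivalence ``$B-A\not\equiv\tfrac{\pi}{2}\pmod\pi$ iff the lemma's hypothesis'' does not fall out cleanly: computing $\cos(B-A)$ from the defining formulae for $\omega_1,\omega_1',\omega_2'$ shows that the degenerate locus is governed by a different combination of the angles than $\cos\phi\cos k$ or $\cos\phi'\cos k'$. The paper's proof in fact does not confront this injectivity question either (it is tacitly assumed, with further detail deferred to a cited reference), so your instinct that this is the crux is well taken; but do not expect the stated hypothesis by itself to dispose of it.
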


\begin{proof}
    Given $\boldsymbol{\alpha}$ and $\boldsymbol{\beta}$ ($\cos{\theta}\geq0$), then $\sin{\phi}$ and $\sin{\phi}'$ can be uniquely determined, but the signs of the remaining trigonometric function values in $\boldsymbol{\alpha}$ and $\boldsymbol{\beta}$ cannot be uniquely determined. We can assume that $\cos\phi,\cos k,\sin k$ and $\cos\phi',\cos k',\sin k'$ satisfy the constant value requirements of $\boldsymbol{\alpha}$ and $\boldsymbol{\beta}$. Then $-\cos\phi,-\cos k,-\sin k$ and $-\cos\phi',-\cos k',-\sin k'$ also satisfy requirements of $\boldsymbol{\alpha}$ and $\boldsymbol{\beta}$. In other words, if $\phi,k$ and $\phi',k'$ satisfy the requirements, $\pi-\phi,\pi+k$ and $\pi-\phi',\pi+k'$ also satisfy the requirements without considering the period. Now, there are four sets of satisfying angles: $(\phi,k,\phi',k')$, $(\pi-\phi,\pi+k,\phi',k'),(\phi,k,\pi-\phi',\pi+k')$, and $(\pi-\phi,\pi+k,\pi-\phi',\pi+k')$.

    Next, we analyse the values of $c_{11}$ and $c_{12}$. Assume that the current value of $\omega+\omega'$ makes $c_{11}$ and $c_{12}$ the aforementioned constant values. According to the above analysis, the current values of the angle parameters are changed from $(\phi,k,\phi',k')$ to $(\pi-\phi,\pi+k,\phi',k')$ or $(\phi,k,\pi-\phi',\pi+k')$, the values of $\boldsymbol{\alpha}$ and $\boldsymbol{\beta}$ remain unchanged. Correspondingly, when the above changes occur, the value of $\omega_1+\omega_1'(\omega_1+\omega_2')$ will become $\omega_1+\omega_1'+\pi(\omega_1+\omega_2'+\pi)$ within one cycle. At this time, the current values of $\omega$ and $\omega'$ will no longer be able to keep $c_{11}$ and $c_{12}$ at constant values, and $\omega+\omega'$ can only be modified to $\omega+\omega'+\pi$.
    By the same token, when $(\phi,k,\phi',k')$ changes into $(\pi-\phi,\pi+k,\pi-\phi',\pi+k')$, to keep $c_{11}$ and $c_{12}$ as constant values, $\omega+\omega'$ should be modified to $\omega+\omega'+2\pi$. 
    
    In summary, the same values of $\boldsymbol{\alpha}$, $\boldsymbol{\beta}$, $c_{11}$ and $c_{12}$ can be obtained at the same time for only the following four sets of rotated matrix angle with ignoring the cycles: 
    \begin{equation}  
        \begin{aligned}&\left(\mathrm{a}\right)\phi,k,\phi',k,\omega+\omega'\\&\left(\mathrm{b}\right)\pi-\phi,\pi+k,\phi',k',\omega+\omega'+\pi\\&\left(\mathrm{c}\right)\phi,k,\pi-\phi',\pi+k',\omega+\omega'+\pi\\&\left(\mathrm{d}\right)\pi-\phi,\pi+k,\pi-\phi',\pi+k',\omega+\omega'+2\pi
        \end{aligned}
    \end{equation}
    By calculating $\boldsymbol{C}$ through these four sets of values, it can be seen that $\boldsymbol{C}$ will not change. In other words, it can be considered that the trigonometric function values of $\phi$, $k$, $\phi'$, $k'$ (including their signs) are determined(i.e. quantum state is uniquely determined), after given $\boldsymbol{\alpha}$, $\boldsymbol{\beta}$, $c_{11}$ and $c_{12}$.

\end{proof}
    Therefore, when $\cos^2\phi\cos^2k\neq0$ or $\cos^2\phi'\cos^2k'\neq0$, to calculate the information content of a quantum state under the constraints of the pure state structure, only need to calculate the information contained in $\boldsymbol{\alpha}$, $\boldsymbol{\beta}$, $c_{11}$ and $c_{12}$. More detailed analysis can be found in \cite{zhang2022application}.

\section{\label{Appendix D}The proof of proposition \ref{SDP}}

We redescribe Proposition.1 in Section IV.
\begin{proposition}
    For any positive integer $n$, any (standard) $n$-qubit system, its first $k-1$ parameters $b_i,\;1 \le i \le k - 1 \le {4^n}$ are given in the natural index order of the standard Bloch representation. Assume ${b_k}$ is the $k$-th parameter, then the upper bound $U_ {b_k}$ and the lower bound $L_{b_k}$ of parameter ${b_k}$ can be solved by the standard SDP.
    \begin{itemize}
	\item For the lower bound problem:
	\begin{equation}\label{sdp_l}
		\begin{array}{rl}\min\limits& {b_k}\\ \text{subject to}& - 1 \le {b_k}, \cdots ,{b_{{4^n}}} \le 1\\ &Q\succeq0\end{array}
	\end{equation} 
    \end{itemize}
    \begin{itemize}
        \item For the upper bound problem:
	\begin{equation}\label{sdp_u}
		\begin{array}{rl}\min\limits& {-b_k}\\ \text{subject to}& - 1 \le {b_k}, \cdots ,{b_{{4^n}}} \le 1\\ &Q\succeq0\end{array}
	\end{equation}
    \end{itemize}
\end{proposition}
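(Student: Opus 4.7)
The plan is to recognize that the constraints defining a valid quantum state are, up to the pure-state (rank-one) condition, already expressible as a linear matrix inequality in the Bloch parameters, so that the feasibility region for $(b_k,\ldots,b_{4^n})$ is a spectrahedron and optimizing a single coordinate over it is by definition a standard SDP. I would first verify that the map $(b_1,\ldots,b_{4^n})\mapsto Q=\tfrac{1}{2^n}\sum_{j=1}^{4^n}b_j T_j$ is affine, so when $b_1,\ldots,b_{k-1}$ are fixed and only $b_k,\ldots,b_{4^n}$ are treated as variables, $Q$ depends affinely on them. Since each $T_j$ is Hermitian (a tensor product of Pauli matrices including the identity), $Q$ is automatically Hermitian, and the condition that $Q$ represents a legitimate quantum state reduces to $Q\succeq 0$ together with $\mathrm{tr}(Q)=1$.

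Next I would show that the trace condition is already enforced by the natural-index ordering. Because $T_1=I^{\otimes n}$ has trace $2^n$ while every other $T_j$ is a nontrivial tensor product of traceless Pauli matrices and hence satisfies $\mathrm{tr}(T_j)=0$, we have $\mathrm{tr}(Q)=b_1$. Fixing $b_1=1$ (as demanded by the Bloch representation) therefore automatically yields $\mathrm{tr}(Q)=1$, removing the need for a separate linear equality constraint. The coordinate bounds $-1\le b_j\le 1$ for $j\ge k$ are linear inequalities, and the objective $\pm b_k$ is a linear functional of the decision variables. Combining these ingredients, the feasibility set is $\{(b_k,\ldots,b_{4^n}): Q(b_k,\ldots,b_{4^n})\succeq 0,\; -1\le b_j\le 1\}$, which is the intersection of a spectrahedron with a box, and optimizing $\pm b_k$ over it matches exactly the standard form in Eqs.~\ref{sdp_l}--\ref{sdp_u}; convexity then guarantees that convergent interior-point solvers return the global optima $L_{b_k}$ and $U_{b_k}$.

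The main obstacle I anticipate is conceptual rather than technical: strictly speaking, we care about pure states, and the pure-state condition $Q^2=Q$ (equivalently $\mathrm{rank}(Q)=1$) is nonconvex and cannot be enforced in an SDP. I would address this by invoking the relaxation argument already used in the main text, namely that optimizing over the convex hull of pure states (i.e., all density matrices) only enlarges the feasible set and therefore yields an upper bound on $U_{b_k}$ and a lower bound on $L_{b_k}$ that are at least as loose as the pure-state versions. Since the goal of the proposition is to provide a computationally tractable bound that is compatible with the posterior information calculation (where an enlarged feasible region still produces a valid, possibly conservative, reference distribution), the convex relaxation is sufficient.

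Finally, I would close by briefly verifying consistency with the 2-qubit analytic case of Lemma~\ref{mu=(0,1)}: specifying $b_2,\ldots,b_7$ (the components of $\boldsymbol{\alpha},\boldsymbol{\beta}$) and then solving the SDP for $c_{11}$ and subsequently for $c_{12}$ reproduces the closed-form intervals derived from $\cos(\omega+\omega'+\omega_1+\omega_1')\in[-1,1]$ and $\sin(\omega+\omega'+\omega_1+\omega_2')\in[-1,1]$ respectively, which anchors the proposition to the proven 2-qubit result and confirms that the SDP machinery recovers the same $b_k^{\max},b_k^{\min}$ used in the definition of posterior information content.
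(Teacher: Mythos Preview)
Your proposal is correct and follows essentially the same approach as the paper: both arguments reduce to observing that $Q$ depends affinely on the undetermined Bloch parameters, that each $T_j$ (the paper calls them $A_k$) is a Hermitian tensor product of Pauli matrices so the constraint $Q\succeq 0$ is a linear matrix inequality, and that the objective $\pm b_k=\pm\langle T_k,Q\rangle$ is linear, hence the problem matches the standard SDP form. Your treatment is in fact more complete than the paper's own proof, which does not explicitly discuss the trace condition being absorbed by $b_1=1$, the pure-state relaxation, or the consistency check against the 2-qubit closed-form bounds.
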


\begin{proof}
    According to the standard form of the SDP problem as follows:
        \begin{equation}\label{DSDP}
		\begin{array}{rl}\min\limits_{X\in S^{2^n}}&\langle C,X\rangle_{\mathbb{S}^{2^n}} \\ \text{subject to}&\langle A_i,X\rangle_{\mathbb {S}^{2^n}} \le d_i,\\ &X\succeq0\end{array}
	\end{equation}
 
    If the above proposition is true, the objective function needs to satisfy $\langle C,Q\rangle=b_k$, and the constraints need to satisfy $-1\le\langle A_i,Q\rangle=b_i \le 1$,$i \in [k,4^n]$. 

    Therefore, it is only necessary to prove that for each $b_k$ corresponds to a matrix $C$ and a set of $A_k$, and they are all Hermitian matrix, and make the formalization of Eq.(\ref{sdp_l}) and (\ref{sdp_u}) can be obtained.

    In a $n$-qubit composite system, the Bloch parameter $b_k$ can be determined from local fiducial measurements. Its measurement process can be expressed as:
    \begin{equation}
        \langle A_{k},Q\rangle = b_k
    \end{equation}
    
    Where $A_k$ is a the measurement operator, and each Bloch parameter corresponds to a measurement operator $A_k$. For example, taking the seventh parameter $b_7$ in the 2-qubit quantum system as an example, the measurement operator $A_8$ corresponding to the seventh parameter $b_8$ is ${\sigma _1} \otimes {\sigma _1}$, and its matrix is expressed as
    \begin{equation}\label{b7}
	{\sigma _1} \otimes {\sigma _1} = \left[ {\begin{array}{*{20}{c}}
			0&0&0&1\\
			0&0&1&0\\
			0&1&0&0\\
			1&0&0&0
	\end{array}} \right]
    \end{equation}

    Because $A_k$ is a tensor matrix composed of ${\sigma _i},i \in \left\{ {0,1,2,3} \right\}$, it must be a Hermitian matrix.
    
    When the $k$-th Bloch parameter $b_k$ is calculated, it is obvious that the matrix $C$ should be equal to the measurement operator $A_k$ of $b_k$.
    Therefore, there is always a set of identical matrix $C$ and $A_k$ corresponding to each Bloch parameter $b_k$, making the process of calculating the upper and lower bounds of the posterior information (Eq.(\ref{sdp_l}) and (\ref{sdp_u})) into a standard SDP.
    \end{proof}
    The above proposition shows that even in a high-dimensional qubit space $(n>2)$, the process of calculating the upper bound of the posterior information amount can always be analogized to the situation of solving the 2-qubit space, and it can be summarized as an SDP problem.

\section{Supplemental Material: The justifications for defining information content through $\chi^2$ divergence}

\subsection{introduction}
This section considers how to formally define information and explains the justifications for defining information based on the $\chi^2$ divergence. Consider an event $E$ that may occur in a random experiment. When we actually observe event $E$, how much information do we obtain? Obviously, the degree of surprise of event $E$ depends on the probability of $E$ occurring. The more unlikely the event, the greater the degree of surprise, and the more information it can offer. That is, "Information means surprise"\cite{6773024}. 
\subsection{How to quantify information}

"Surprise" refers to the psychological shock of a cognitive agent. It is caused by the change in cognitive state of the agent. Generally speaking, we describe the cognitive state of an agent with probability distribution and suppose that it can be changed by the occurrence of random events. In this way, we can further quantify ”surprise” in terms of the estimate likelihood between a sampling distribution and an underlying (real) distribution, both of which describes a cognitive state, respectively. Specifically, assuming that the cognitive state before a random event occurs is described by probability distribution $q$, and the cognitive state after the random event occurs is described by probability distribution $p$, then the ”surprise” measure caused by the occurrence of the random event (that is, the amount of information brought by the occurrence of the random event) is recorded as $d(p,q)$. \textbf{If information implies surprise, then the value of $d(p,q)$ should be negatively correlated with the likelihood that $q$ was misestimated as $p$ due to sampling bias.}

\subsection{Defining information content through $\chi^2$ divergence}
The main text presents a framework for analyzing the upper bound of quantum systems' posterior information. Based on this framework, we consider the definition of information based on the Bernoulli distribution (i.e.,0-1 distribution, 2-category distribution). Assuming that the occurrence of a random event causes the 0-1 distribution corresponding to the system state to change from $q$ to $p$, information content $d(p,q)$ brought by the random event should satisfy the following axiomatic conditions \textbf{Axiom A:}:

\begin{itemize}
    \item \textbf{A1} \textbf{Non-negativity (NN):} Information content brought by the occurrence of any random event should be non-negative.
    
    \item \textbf{A2} \textbf{Likelihood Consistency (LC):} For any two 0-1 distributions $p=[p_1,p_2]$ and $q=[q_1,q_2]$, under any given sample size $m$ (compatible with $p,q$), let the likelihood when the true distribution is $q$ and the sampling distribution is $p$ is $l_{p2q}$, and the likelihood that the true distribution is $p$ and the sampling distribution is $q$ is $l_{q2p}$. Without loss of generality, if $l_{p2q}>l_{q2p}$, then likelihood consistency requires $d(p,q)<d(q,p)$.

    \begin{remark}
        Obviously, the premise for LC to be well defined is that the relationship between $l_{p2q}$ and $l_{q2p}$ is independent of the sample size $m$. The simulation results within the range allowed by numerical accuracy show that the relationship between $l_{p2q}$ and $l_{q2p}$ is indeed independent of the sample size. See the first two subfigures in Fig.1 a) and b). This supplemental material briefly proves that $\chi^2$-divergence complies with likelihood consistency, please refer to Appendix~\ref{lc-CSD}. 
        % 本文简要证明了 CSD 符合似然一致性的证明，请参考附件 A。
    \end{remark}

    \item \textbf{A3} \textbf{Asymmetry (AS):} $d(p,q)$ is generally not equal to $d(q,p)$.
    \begin{remark}
        Because obviously $l_{p2q}$ is not equal to $l_{q2p}$ in general. A typical example is to consider the estimated likelihood $l_{p2q}$ and $l_{q2p}$ between $p=[0.9,0.1]$ and $q=[1,0]$.
    \end{remark}
    
    \item \textbf{A4} \textbf{Infinitesimal-Scale Likelihood Consistency (ILC):} It is only required that \textbf{A2} holds when the difference between $p$ and $q$ is infinitesimal.
    
    \item \textbf{A5} \textbf{1-Bit}: 
    A binary measurement procedure provides at most one classical bit of (posterior) information.
    \begin{remark}
      This implies that no matter how much prior knowledge we have, the measurement process for determining a two-valued state can yield at most 1 bit of (posterior) classical information. That is, $d([1,0],[1/2,1/2])=1$. 
    \end{remark}
    
    \item \textbf{A6} \textbf{FRD:} The definition of information content should be an approximation of the Fisher-Rao distance (FRD) on an infinitesimal scale. 
    \begin{remark}
        The Fisher-Rao distance can be shown to be the only reasonable distance measure for probability distributions in a certain sense. However, FRD does not satisfy all of Axioms A. In particular, the symmetries of FRD directly conflict with \textbf{Axiom A3 (AS)}. Therefore, the definition of information content should approximate the Fisher-Rao distance (FRD).
    \end{remark}

    \item \textbf{A7} \textbf{Weak Likelihood Consistency (WLC):} Given three 0-1 distributions $p=[p_1=q_1+e,q_2-e]$, $ q=[q_1,q_2]$ and $r=[q_1-e,q_2+e]$, where $e$ is a small positive number, and $m$ is sample size (compatible with $p,q,r$). Let 
    \begin{itemize}
        \item $l_{p2q}$ is the likelihood when the true distribution is $q$ and the sampling distribution is $p$.
        \item $l_{r2q}$ is the likelihood when the true distribution is $q$ and the sampling distribution is $r$.
    \end{itemize}  
    
    WLC requires that, when $l_{p2q}\geq l_{r2q}$ (w.r.t. sample size $m$), then $d(p,q)\leq d(r,q)$. In addition, when $l_{p2q}\leq l_{r2q}$(w.r.t. sample size $m$), then $d(p,q)\geq d(r,q)$. Obviously, if both conditions $d(p,q) \leq d(r,q)$ for some $m$ and $d(p,q)\geq d(r,q)$ for some (different) $m$ are true simultaneously, then WLC requires that $d(p,q) = d(r,q)$.

    % WLC requires that $l_{p2q}\geq l_{r2q}$implies $d(p,q)\leq d(r,q)$, and $l_{p2q}\leq l_{r2q}$(w.r.t. some $m$) implies $d(p,q)\geq d(r,q)$. This imples that for some sample size $m$, the selected information content needs to satisfy $d(p,q)= d(r,q)$.

    \begin{remark}
        Both LC and SLC only consider the consistency between the estimated likelihood and the separation measure $d$ between two probability distributions. A natural extension is to consider the consistency between the estimated likelihood and the separation measure $d$ between more distributions. However, the size relationship between the estimated likelihoods of three (or more) distributions may depend on the sample size $m$. When the magnitude relationship between $l_{p2q}$ and $l_{r2q}$ reverses under different simple sizes of $m$, LC cannot be well defined. Therefore, it is necessary to define LC in a weakened sense as defined by the A7. As an example, consider three distributions: $p = [0.9, 0.1]$, $q = [0.8, 0.2]$, and $r = [0.7, 0.3]$. As shown in Fig.1 c), the relationship between the estimated likelihoods of the three distributions of $p$, $q$, and $r$ may change under different sample sizes. 
    \end{remark}
  
\end{itemize}   

    \begin{figure}[htbp]
        \centering
        \includegraphics[width=0.65\linewidth]{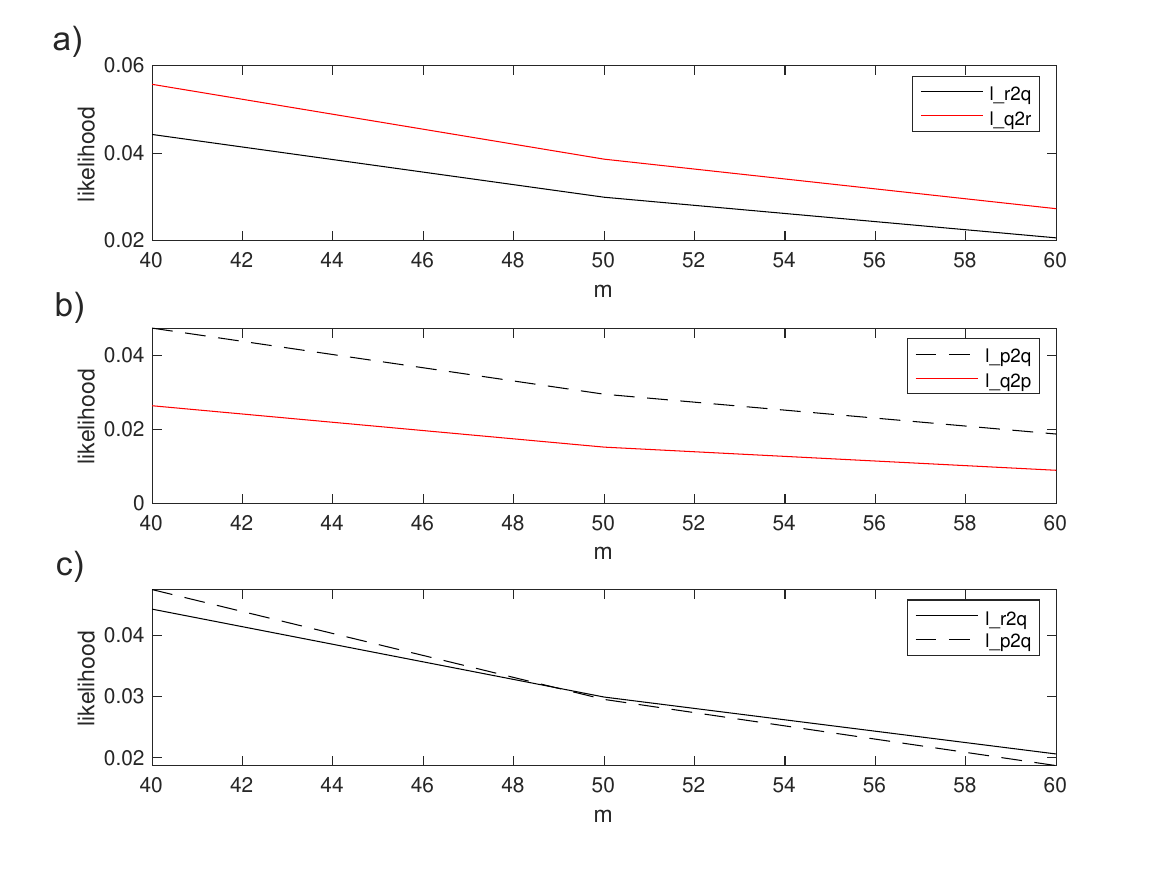}
        \caption{Relationship between the relative size of likelihood and sample size}
        \label{fig:Likelihood}
    \end{figure}

In summary, we mainly consider several candidate definitions $(C)$ of $d(p,q)$.

\begin{itemize}
    \item [C1] \textbf{HD:} $d(p,q) = H(p)-H(q)$, where $H(p)$ is the Shannon entropy of $p$.
    \item [C2] \textbf{FRD:} $d(p,q)=FRD (p,q)$, that is Fisher-Rao Distance.
    \item[C3] \textbf{$\text{KLD}_e$:} $d(p,q)=KLD_e (p,q)$, that is, the KL Divergence with base $e$.
    \item[C4] \textbf{$\text{KLD}_2$:} $d(p,q)=KLD_2 (p,q)$, that is, the KL Divergence with base $2$.
    \item[C5] \textbf{CSD:} $d(p,q)=CSD (p,q)$, that is, the $\chi ^2$ Divergence.
\end{itemize}

The satisfaction of the above candidate $C$ for different premise axioms A is summarized in the following table.

\begin{table}[ht]
\centering
\caption{Satisfaction of Axioms A1-A7 by Different Information Measures}
\label{tab:axiom_satisfaction}
\renewcommand{\arraystretch}{1.5} % 增加行高
\setlength{\tabcolsep}{12pt} % 增加列间距
\large % 增大字体
\begin{tabular}{|c|c|c|c|c|c|c|c|}
\hline
 & NN & LC & ILC & AS & 1-Bit & FRD & WLC \\
\hline
\textbf{HD} & $\times$ & $\times$ & $\times$ & $\checkmark$ & $\checkmark$ & $\times$ & $\times$ \\ 
\hline
\textbf{FRD} & $\checkmark$ & $\times$ & $\checkmark$ & $\times$ & $\times$ & $\checkmark$ & $\times$ \\
\hline
\textbf{$\text{KLD}_e$} & $\checkmark$ & $\checkmark$ & $\checkmark$ & $\checkmark$ & $\times$ & $\checkmark^{*}$ & $\times$ \\
\hline
\textbf{$\text{KLD}_2$} & $\checkmark$ & $\checkmark$ & $\checkmark$ & $\checkmark$ & $\checkmark$ & $\checkmark^{*}$ & $\times$ \\
\hline
\textbf{CSD} & $\checkmark$ & $\checkmark$ & $\checkmark$ & $\checkmark$ & $\checkmark$ & $\checkmark^{*}$ & $\checkmark$ \\
\hline
\end{tabular}
\end{table}

%这里$\checkmark^{*}$代表近似成立。因为KLDe在无限小尺度上与FRD的平方之间存在比例近似关系；KLD2和CSD是KLDe的（上界）近似。
\begin{remark}
    Here, $\checkmark^{*}$ means approximately holds. Because $\text{KLD}_e$ is the approximation of FRD at an infinitesimal scale. $\text{KLD}_2$ is the proportional approximation of FRD at an infinitesimal scale, i.e., $\text{KLD}_2 = c * \text{KLD}_e$, where the constant $c$ equals $log_2(exp(1))$. For the proof, please see the Appendix \ref{Relationships}. $\text{KLD}_2$ and $\chi^2$-divergence are (upper bound) approximations of $\text{KLD}_e$\cite{1320776d-9e76-337e-a755-73010b6e4b64,csiszar1967information, AmariNagaoka2000}. Most of the conclusions in the above table can be obtained from the references. In particular,for the proof that $\chi^2$-divergence satisfies WLC, please refer to the Appendix \ref{wlc-CSD}.
\end{remark}

In addition, the formal simplicity of $\chi^2$ divergence (CSD) are very important for analyzing the information upper bound of composite quantum systems. For the distinction between entangled states and separable states, $\chi^2$ divergence can quantify the information upper bound of entanglement resources through behavioral analysis under the framework of local operations and classical communication (LOCC). Most importantly, this paper starts from first principles and strictly analyzes the inherent rationality of $\chi^2$ divergence as a measure of information content according to Axioms A1-A7, which effectively unifies the consistency in quantum and classical scenarios. In summary, the core advantages of $\chi^2$ divergence in the information analysis of composite quantum systems are: the compatibility of its mathematical form with classical statistics, the intrinsic connection to key quantum information metrics, sensitivity to entanglement and non-classical correlations, and computational and experimental efficiency. Combining these justifications, we finally choose $\chi^2$ divergence as the definition of information content.

% 另外，对于分析复合量子系统的信息上届，数学处理的简洁性与可扩展性十分重要。卡方散度基于卡方散度的信息定义可以有效得解决高维复合系统的适应性的问题。因为对于对于多体量子系统，卡方散度的张量积结构便于分解计算，更适用于分析子系统间的关联性与独立性。同时，对于纠缠态与可分态的区分，卡方散度可通过局部操作与经典通信（LOCC）框架下的行为分析，量化纠缠资源的信息上界。并且在量子-经典混合系统中，卡方距离能够有效捕捉量子失协等非经典关联，为复合系统的信息存储能力提供理论边界。

% 综上所属，卡方距离在复合量子系统信息分析中的核心优势在于：其数学形式与经典统计的兼容性、与关键量子信息度量的内在联系、对纠缠与非经典关联的敏感性，以及在实际计算与实验验证中的高效性。

\subsubsection{The proof of $\chi^2$-divergence satisfying likelihood consistency}
\label{lc-CSD}
% \begin{lemma}
%     For any two 0-1 distributions $p=[p_1,p_2]$ and $q=[q_1,q_2]$, under any given sample size $m$(compatible with $p,q$), let the likelihood that $q$ is misestimated as $p$ by the sample be $l_{p2q}$, and the likelihood that $p$ is misestimated as $q$ by the sample be $l_{q2p}$. Without loss of generality, if $l_{p2q}>l_{q2p}$, s.t. $D_{\chi^2}(p,q)<D_{\chi^2}(q,p)$.
% \end{lemma}
\begin{lemma}
    Let ${p}$ and $q$ be any two Bernoulli distributions with parameters in $(0,1)$ (i.e., non-degenerate 0-1 distributions), and let $m\geq1$ be any sample size. Define:
\begin{itemize}
    \item $l_{p2q}$ is the likelihood when the true distribution is ${q}$ and the sampling distribution is ${p}$.
    \item $l_{q2p}$ is the likelihood when the true distribution is ${p}$ and the sampling distribution is ${q}$.
\end{itemize}
Assume that the likelihood is determined by the maximum likelihood estimation rule: given a sample of size $m$~(compatible with $p$ and $q$). Without loss of generality, if $l_{p2q}>l_{q2p}$, then: 
\begin{equation}
    D_{\chi^2}(p,q)<D_{\chi^2}(q,p)
\end{equation}
\end{lemma}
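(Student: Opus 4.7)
First, I would reduce the lemma to a purely algebraic statement in two real parameters. Write $p=(a,1-a)$ and $q=(b,1-b)$ with $a,b\in(0,1)$. Under the MLE rule, the per-sequence likelihood of observing a sample whose empirical distribution equals $p$ when the true parameter is $q$ is $l_{p2q}=b^{ma}(1-b)^{m(1-a)}$, and symmetrically $l_{q2p}=a^{mb}(1-a)^{m(1-b)}$. Taking logarithms and dividing by $m$ shows that the sample-size-independent quantity governing the comparison is
\begin{equation}
g(a,b):=a\ln b+(1-a)\ln(1-b)-b\ln a-(1-b)\ln(1-a),
\end{equation}
so $l_{p2q}>l_{q2p}$ is equivalent to $g(a,b)>0$. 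A direct computation using $D_{\chi^{2}}(p,q)=\sum_i(p_i-q_i)^2/q_i$ gives $D_{\chi^{2}}(p,q)=(a-b)^{2}/[b(1-b)]$ and $D_{\chi^{2}}(q,p)=(a-b)^{2}/[a(1-a)]$, so the desired conclusion $D_{\chi^{2}}(p,q)<D_{\chi^{2}}(q,p)$ is equivalent to $a(1-a)<b(1-b)$. The entire lemma is therefore reduced to the implication
\begin{equation}
g(a,b)>0\ \Longrightarrow\ a(1-a)<b(1-b).
\end{equation}

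Next, I would treat $g$ as a function of the single variable $a$ with $b$ held fixed, and exploit strict convexity. Let $G(a):=g(a,b)$ for $a\in(0,1)$. Two direct calculations do all the work:
\begin{equation}
G''(a)=\frac{b}{a^{2}}+\frac{1-b}{(1-a)^{2}}>0,\qquad G(b)=G(1-b)=0.
\end{equation}
So $G$ is strictly convex on $(0,1)$ and vanishes at the two points $a=b$ and $a=1-b$. A strictly convex function with two distinct zeros is strictly negative between them and strictly positive outside; in the degenerate case $b=1/2$ the two zeros merge into a double zero at $a=1/2$ (one also checks $G'(1/2)=0$), and strict convexity then forces $G(a)>0$ for every $a\neq1/2$. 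In both regimes $G(a)>0$ holds iff $a$ lies strictly outside the closed interval with endpoints $b$ and $1-b$, equivalently iff $|a-1/2|>|b-1/2|$. Since $x\mapsto x(1-x)=\tfrac14-(x-\tfrac12)^{2}$ is strictly decreasing in $|x-1/2|$, this is exactly $a(1-a)<b(1-b)$.

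The main obstacle, though slight, is the identification of $a=1-b$ as the second zero of $G$; this is what makes the convexity argument close cleanly, since it pins down the precise interval on which $G<0$. The identity $G(1-b)=0$ can be verified by direct substitution, or more elegantly from the two symmetries $g(a,b)=g(1-a,1-b)$ and $g(a,b)=-g(b,a)$, which together force $g(1-b,b)=-g(1-b,b)=0$. Once both zeros are in hand the remainder is routine: strict convexity from $G''>0$ forces the sign pattern, and the change of variable $|a-1/2|$ converts the geometric condition into the algebraic one. Chaining the implications yields $l_{p2q}>l_{q2p}\Rightarrow g(a,b)>0\Rightarrow a(1-a)<b(1-b)\Rightarrow D_{\chi^{2}}(p,q)<D_{\chi^{2}}(q,p)$, which is the assertion of the lemma.
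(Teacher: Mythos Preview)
Your proof is correct under the per-sequence likelihood reading you adopt, and it takes a genuinely different route from the paper. One caveat on interpretation: the paper's own proof writes $l_{p2q}=\binom{m}{ma}\,b^{ma}(1-b)^{m(1-a)}$, i.e.\ the probability that the empirical \emph{count} equals $ma$, not the per-sequence likelihood you use. Since $\binom{m}{ma}\neq\binom{m}{mb}$ in general, the hypothesis $l_{p2q}>l_{q2p}$ is not literally the same condition under the two readings; under the paper's reading your clean reduction to $g(a,b)>0$ acquires an additive term $\ln\binom{m}{ma}-\ln\binom{m}{mb}$, which (by your own equivalence) always has the opposite sign to $m\,g(a,b)$, so a further estimate is needed to close the argument for that version.

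Setting that aside, your method is more transparent than the paper's. The paper argues by contradiction, applies Stirling's formula to the binomial ratio, rewrites the dominant exponent as $D_{KL}(q\|p)-D_{KL}(p\|q)$, invokes by citation a KL-asymmetry inequality for Bernoulli parameters in the restricted range $\tfrac12\le p<q<1$ (leaving the remaining parameter regions implicit), and dismisses the finite-$m$ remainder informally. Your route avoids all of this: the observation $G(b)=G(1-b)=0$ together with strict convexity $G''(a)=b/a^{2}+(1-b)/(1-a)^{2}>0$ pins down the sign of $G$ on the whole interval, and the translation $|a-\tfrac12|>|b-\tfrac12|\Leftrightarrow a(1-a)<b(1-b)$ is immediate. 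Your argument in fact yields the full biconditional $g(a,b)>0\Leftrightarrow a(1-a)<b(1-b)$ for every compatible $m$ at once, whereas the paper's Stirling-based argument is really only asymptotic. If you wish also to cover the paper's count-likelihood, note that the function $\tilde G_m(a):=\ln\binom{m}{ma}-\ln\binom{m}{mb}+mG(a)$ still vanishes at $a=b$ and $a=1-b$ (since $\binom{m}{m(1-b)}=\binom{m}{mb}$), so the same zero-location idea is available; however, convexity of $\tilde G_m$ is no longer automatic and would require a separate check.
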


\begin{proof}
    Proof by contradiction: 
    Assume, to the contrary, that according to the definition in lemma 1, without loss of generality, if $l_{p2q}>l_{q2p}$, then: 
    \begin{equation}\label{pq>qp}
    D_{\chi^2}(p,q)>D_{\chi^2}(q,p)
    \end{equation}
    To the 0-1 distribution, we let $p_1=p,~p_2=1-p,~q_1=q,~q_2=1-q$ for simplicity.
    $D_{\chi^2}(p,q)$ can be expressed as:
    \begin{equation}  
    	\begin{split}
    	D_{\chi^2}(p,q)  \equiv  \left( \sum\nolimits_{i} \frac{p^{2}_{i}}{q_{i}} \right) -1 = = \frac{p^{2}}{q}+\frac{(1-p)^{2}}{1-q}-1 =\frac{(p-q)^2}{q(1-q)}
           	\end{split}
    \end{equation}
    In addition, $D_{\chi^2}(q,p)$ can be expressed as:
    \begin{equation}
        D_{\chi^2}(q,p) \equiv\left( \sum\nolimits_{i} \frac{q^{2}_{i}}{p_{i}} \right) -1\frac{(q-p)^2}{p(1-p)}
    \end{equation}
    Obviously, according to the Eq.\ref{pq>qp}  we can get:
    \begin{equation}
        p(1-p)>q(1-q)
    \end{equation}
    Without loss of generality, we assume $\frac{1}{2}\leq p<q<1$. Therefore, according to the definition of likelihood, we know that 
    \begin{equation}l_{p2q}=P(mp,p)=\binom{m}{mp}q^{mp}(1-q)^{m-mp}
    \end{equation}
    \begin{equation}l_{q2p}=P(mq,q)=\binom{m}{mq}p^{mq}(1-p)^{m-mq}\end{equation}
    where $m\mathrm{,}mp\mathrm{,}mq$ are all positive integers. We need to prove the inequality:     
    \begin{equation}
    q^{mp}(1-q)^{m-mp}\frac{m!}{(m-mp)!(mp)!}<p^{mq}(1-p)^{m-mq}\frac{m!}{(m-mq)!(nq)!}
    \end{equation}
    Let the left side of the inequality be $L$ and the right side be $R$. 
    \begin{equation}
        L=q^{mp}(1-q)^{m-mp}\frac{m!}{(m-mp)!(mp)!},\quad R=p^{mq}(1-p)^{m-mq}\frac{m!}{(m-mq)!(mq)!}
    \end{equation}
    According to Stirling's formula~\cite{Stirling}, binomial coefficient of $L$ can be approximated as: 
    \begin{align}
        \frac{m!}{(m-mp)!(mp)!}&\approx\frac1{\sqrt{2\pi mp(1-p)}}\cdot\frac{m^m}{m^{mp}p^{mp}\cdot m^{m(1-p)}(1-p)^{m(1-p)}}\\&=\frac1{\sqrt{2\pi mp(1-p)}}\cdot\frac1{p^{mp}(1-p)^{m(1-p)}}
    \end{align}
    Substituting into $L$, it can be expressed as:
    \begin{align}
    L &\approx \frac{1}{\sqrt{2\pi mp(1-p)}} \cdot \frac{1}{p^{mp}(1-p)^{m(1-p)}} \cdot q^{mp}(1-q)^{m(1-p)} \\
  &= \frac{1}{\sqrt{2\pi mp(1-p)}} \cdot \left(\frac{q}{p}\right)^{mp} \left(\frac{1-q}{1-p}\right)^{m(1-p)}
\end{align}
    Rearranging the original expression:
    \begin{equation}
        L\approx \frac{1}{\sqrt{2\pi mp(1-p)}}\cdot\exp\left(mp\ln\frac{q}{p}+m(1-p)\ln\frac{1-q}{1-p}\right)
    \end{equation}
    Define \(\phi(s, t)=s\ln\frac{t}{s}+(1-s)\ln\frac{1-t}{1-s}\), then
        \begin{equation}
        L\approx \frac{1}{\sqrt{2\pi mp(1-p)}}\cdot\exp\left(m\phi(p,q)\right)
    \end{equation}
    In the same way, $R$ can be expressed as:
    \begin{equation}
        R\approx \frac{1}{\sqrt{2\pi mq(1-q)}}\cdot\exp\left(m\phi(q,p)\right)
    \end{equation}
    Consider the ratio $L/R$,
    \begin{equation}
        \frac{L}{R}\approx\frac{\frac{1}{\sqrt{2\pi mp(1-p)}}\exp\left(m\phi(p,q)\right)}{\frac{1}{\sqrt{2\pi mq(1-q)}}\exp\left(m\phi(q,p)\right)}=\sqrt{\frac{q(1-q)}{p(1-p)}}\cdot\exp\left(m\left(\phi(p,q)-\phi(q,p)\right)\right).
    \end{equation}
    When $\frac{1}{2}\leq p<q\leq1$, then 
    \begin{equation}
        \sqrt{\frac{q(1-q)}{p(1-p)}}<1
    \end{equation}
    For $exp\left(m\left(\phi(p,q)-\phi(q,p)\right)\right)$, we need to determine the sign of $\phi(p,q)-\phi(q,p)$. According to the definition of Kullback-Leibler divergence, we have
    \begin{equation}
        \begin{split}
            \phi(p,q)-\phi(q,p)&=p\ln\frac{q}{p}+(1-p)\ln\frac{1-q}{1-p}-q\ln\frac{p}{q}-(1-q)\ln\frac{1-p}{1-q}\\&=-(D_{KL}(p\|q)-D_{KL}(q\|p))
        \end{split}       
    \end{equation}
    Given the asymmetry of KL divergence~\cite{ElementsofInformationTheory}, when $\frac{1}{2}\leq p<q\leq1$, we have
    \begin{equation}
        D_{KL}(p\|q)-D_{KL}(q\|p)>0
    \end{equation}
    Therefore, $\phi(p,q)-\phi(q,p)<0$ holds. And $n$ is a positive integer, then 
    \begin{equation}
        exp\left(m\left(\phi(p,q)-\phi(q,p)\right)\right)<1
    \end{equation}
    In summary, as $m \to \infty$, the original expression equals:
    \begin{equation}
        \frac{L}{R}=\sqrt{\frac{q(1-q)}{p(1-p)}}\cdot\exp\left(m\left(\phi(p,q)-\phi(q,p)\right)\right) <1
    \end{equation}
    For small $n$ (where $mp$ amd $nq$ are integers), we can use Stirling's formula with a remainder term~\cite{RemarkStirling'sFormula}. 
    \begin{equation}
        m! = \sqrt{2\pi m}\left(\frac{m}{e}\right)^m e^{r_m}, \quad \text{where}, \quad \frac{1}{12m+1} < r_{m} < \frac{1}{12m}.
    \end{equation}
    It is easy to prove that, for the ratio \(L/R\), Stirling's formula with a remainder term has no effect. Because the remainder term $e^{r_m}$ is canceled out during the calculation of the ratio $L/R$. Clearly, when $m $ is a positive integer, \(L/R < 1\) ($l_{p2q}<l_{q2p}$)always holds. This contradicts our assumption that $l_{p2q}>l_{q2p}$. Therefore, the original proposition (lemma 1) holds.
    
\end{proof}

\subsubsection{Relationships between ${KLD_e(KLD_2)}$ and $FRD$}   
\label{Relationships}

The Kullback-Leibler divergence(KLD) measures the difference between two probability distributions $p$ and $q$. It is an asymmetric measure. For continuous distributions:
\begin{equation}
\label{KLD}
    D_\mathrm{KLD}(p, q)=\int p(x)\log\frac{p(x)}{q(x)}dx
\end{equation}
Among them, the base of $\log$ can be arbitrary (such as 2, $e$ or 10). The choice of the base does not affect the essential meaning of the KL divergence (i.e., the relative entropy or information gain between distributions), but it will change its numerical value and unit.

% For the base of 2 and the base of $e$, according to change of base formula, the KL divergence has the following numerical relationship:
% \begin{equation}D_{\mathrm{KLD}_e}(P, Q)=D_{\mathrm{KLD}_2}(P,Q)\times\ln2\end{equation}
The base number does not affect the proof of the subsequent lemma.

\begin{lemma}
    Let $p(x|\theta)$ and $q=p(x|\theta + d\theta)$ be two close probability distributions (i.e., $\theta$ is continuous parameter $d\theta$ is an infinitesimal perturbation). For any distributions $p$ and $q$:
    \begin{equation}
        D_\mathrm{KLD}(p,q)\approx\frac{1}{2}D_{\text{FRD}}^2(p,q)
    \end{equation}
\end{lemma}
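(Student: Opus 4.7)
The plan is to Taylor-expand the KL divergence to second order in the perturbation $d\theta$ and identify the leading nonzero term with the Fisher information, which in turn defines the infinitesimal Fisher--Rao metric. Concretely, I would write $\ell(x;\theta) := \log p(x|\theta)$ and consider
\begin{equation}
D_{\mathrm{KLD}}(p,q) = \int p(x|\theta)\bigl[\ell(x;\theta)-\ell(x;\theta+d\theta)\bigr]\,dx.
\end{equation}
Expanding $\ell(x;\theta+d\theta) = \ell(x;\theta) + \partial_\theta \ell\, d\theta + \tfrac{1}{2}\partial_\theta^2 \ell\,(d\theta)^2 + O(d\theta^3)$ and substituting yields three contributions to track.

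The first step is to show the first-order term vanishes. Using $\partial_\theta \ell = (\partial_\theta p)/p$, one gets $\int p(x|\theta)\,\partial_\theta \ell\,dx = \partial_\theta \int p(x|\theta)\,dx = \partial_\theta(1) = 0$, since $p$ is a normalized probability density. The second step handles the quadratic term. I would invoke the standard Fisher-information identity $\mathbb{E}_p[\partial_\theta^2 \ell] = -\mathbb{E}_p[(\partial_\theta \ell)^2]$, which follows from differentiating $\int p\,dx = 1$ twice. This turns the surviving term into $\tfrac{1}{2} I(\theta)(d\theta)^2$, where $I(\theta) := \mathbb{E}_p[(\partial_\theta \ell)^2]$ is the Fisher information.

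The third step is to connect this with the Fisher--Rao distance. By definition, the Fisher--Rao distance is the geodesic length in the Riemannian geometry whose infinitesimal line element is $ds^2 = I(\theta)\,d\theta^2$, so for two infinitesimally close distributions $p(x|\theta)$ and $p(x|\theta+d\theta)$ one has $D_{\mathrm{FRD}}^2(p,q) = I(\theta)(d\theta)^2 + O(d\theta^3)$. Comparing this to the expansion of the KL divergence gives the claimed asymptotic equivalence $D_{\mathrm{KLD}}(p,q) \approx \tfrac{1}{2} D_{\mathrm{FRD}}^2(p,q)$, independently of the choice of logarithm base (the base affects the normalization of both sides proportionally, preserving the factor $\tfrac{1}{2}$ up to a global constant that is absorbed in the FRD definition).

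The main obstacle is a conceptual rather than a computational one: making rigorous the interchange of the parameter derivative with the integral (so that both the normalization derivative and the Fisher-information identity are valid), and carefully stating the regularity hypotheses on the parametric family (smoothness of $\theta \mapsto p(x|\theta)$ and finiteness of $I(\theta)$). Everything else is a routine Taylor expansion, so I would confine the writeup to a clean statement of the needed regularity assumptions followed by the three-line calculation sketched above.
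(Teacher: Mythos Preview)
Your proposal is correct and follows essentially the same approach as the paper: Taylor-expand the log-density to second order, use normalization to kill the first-order term, and identify the second-order term with the Fisher information (the paper writes this as $g_{ij}(\theta)=-\mathbb{E}_p[\partial^2_{\theta_i\theta_j}\log p]$, equivalent to your $\mathbb{E}_p[\partial_\theta^2\ell]=-\mathbb{E}_p[(\partial_\theta\ell)^2]$), then recognize $g_{ij}\,d\theta_i\,d\theta_j$ as $D_{\mathrm{FRD}}^2$. Your additional remarks on regularity hypotheses and the logarithm base are refinements the paper omits or only mentions in passing.
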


\begin{proof}
     Perform a Taylor expansion of $q = p(x|\theta + d\theta)$ at $\theta$: 
\begin{equation}
    \log p(x|\theta+d\theta)\approx\log p(x|\theta)+\sum_i\frac{\partial\log p}{\partial\theta_i}d\theta_i+\frac{1}{2}\sum_{i,j}\frac{\partial^2\log p}{\partial\theta_i\partial\theta_j}d\theta_id\theta_j
\end{equation}

Next, adjust the order: 
\begin{equation}
    \log p(x|\theta)-\log p(x|\theta+d\theta)\approx-\sum_i\frac{\partial\log p}{\partial\theta_i}d\theta_i-\frac{1}{2}\sum_{i,j}\frac{\partial^2\log p}{\partial\theta_i\partial\theta_j}d\theta_id\theta_j
\end{equation}

Substitute into the Eq.\ref{KLD}:
\begin{equation}
    D_\mathrm{KLD}(p,q)\approx\int p(x|\theta)\left(-\sum_i\frac{\partial\log p}{\partial\theta_i}d\theta_i-\frac{1}{2}\sum_{i,j}\frac{\partial^2\log p}{\partial\theta_i\partial\theta_j}d\theta_id\theta_j\right)dx
\end{equation}

The first item is calculated separately:
\begin{equation}
    -\int p(x|\theta)\sum_i\frac{\partial\log p}{\partial\theta_i}d\theta_i = -\frac{\partial}{\partial\theta_i}\int p(x|\theta)d\theta_i
\end{equation}

The integral of the probability density function $p(x|\theta)$ is equal to 1, so its derivative is 0. Then:
\begin{equation}
    D_\mathrm{KLD}(p,q)\approx-\frac{1}{2}\int p(x|\theta)\left(\sum_{i,j}\frac{\partial^2\log p}{\partial\theta_i\partial\theta_j}d\theta_id\theta_j\right)dx
\end{equation}

According to the relationship between Fisher information matrix $g_{ij}(\theta)$ and Hessian matrix $H_p$, then:
\begin{equation}
    g_{ij}(\theta)=-H_p\left[\frac{\partial^2\log p}{\partial\theta_i\partial\theta_j}\right]
\end{equation}

Therefore,
\begin{align}
D_\mathrm{KLD} (p,q)
&\approx -\frac{1}{2}\int p(x|\theta)\left(\sum_{i,j}\frac{\partial^2\log p}{\partial\theta_i\partial\theta_j}d\theta_id\theta_j\right)dx \\
&\approx -\frac{1}{2}\sum_{i,j}\frac{\partial^2\log p}{\partial\theta_i\partial\theta_j}d\theta_id\theta_j \cdot\int p(x|\theta)dx\\
&\approx\frac{1}{2}g_{ij}(\theta)d\theta_id\theta_j=\frac{1}{2}D_{\text{FRD}}^2(p,q)
\end{align}
\end{proof}

\subsubsection{The proof of $\chi^2$-divergence satisfying weak likelihood consistency
}
\label{wlc-CSD}
When considering three distributions, the magnitude relationship of estimated likelihoods may depend on the sample size. Therefore, when the magnitude relationship between $l_{p2q}$ and $l_{r2q}$ reverses under different values of $m$, strict LC (Likelihood Consistency) cannot be guaranteed(Similar to the situation in Fig. 1c)). However, in such cases, a weakened version of LC(WLC) need be defined.

According to the definition given in A7, we can propose the following lemma:

% \begin{lemma}
%     For three 0-1 distributions $p=[p_1=q_1+e,q_2-e]$, $ q=[q_1,q_2]$ and $r=[q_1-e,q_2+e]$, $e$ is a small positive number, and $m$ is sample size(compatible with $p,q,r$). If Axiom A7(WLC) is to be met, then $d_{\chi^2}(p,q)= d_{\chi^2}(r,q)$.
% \end{lemma}

% \begin{lemma}
%     Let $p = [p_1 = q_1 + e, p_2 = q_2 - e]$, $q = [q_1, q_2]$ and $r = [r_1 = q_1 - e, r_2 = q_2 + e]$ be three Bernoulli distributions, where $q_1 + q_2 = 1$, $e>0$ is a small positive number (ensuring that all probability values are in the range $[0, 1]$, i.e., $0\leq e\leq \min(q_1, q_2)$), and $m$ is the sample size (compatible with the distributions $p, q ~and~r$). If $\chi^2$ divergence satisfies:
% \begin{equation}
%     d_{\chi^{2}}(p,q)=d_{\chi^{2}}(r,q)
% \end{equation}
% Then, the $\chi^2$ divergence can satisfy Axiom A7(WLC).
% \end{lemma}

\begin{lemma}
    Given three 0-1 distributions $p=[p_1=q_1+e,q_2-e]$, $ q=[q_1,q_2]$ and $r=[q_1-e,q_2+e]$, where $q_1 + q_2 = 1$, $e>0$ is a small positive number (ensuring that all probability values are in the range $[0, 1]$, i.e., $0\leq e\leq \min(q_1, q_2)$), and $m$ is the sample size (compatible with the distributions $p, q ~and~r$). Let 
    \begin{itemize}
        \item $l_{p2q}$ is the likelihood when the true distribution is $q$ and the sampling distribution is $p$.
        \item $l_{r2q}$ is the likelihood when the true distribution is $q$ and the sampling distribution is $r$.
    \end{itemize}  
    The $\chi^2$ divergence ($d_{\chi^2}(\cdot,\cdot)$) can satisfy Axiom A7(WLC).
    % WLC requires that, if $l_{p2q}\geq l_{r2q}$ (w.r.t. sample size $m$), then $d(p,q)\leq d(r,q)$. In addition, if $l_{p2q}\leq l_{r2q}$(w.r.t. sample size $m$), then $d(p,q)\geq d(r,q)$. Obviously, if both conditions $d(p,q) \leq d(r,q)$ for some $m$ and $d(p,q)\geq d(r,q)$ for some (different) $m$ are true simultaneously, then WLC requires that $d(p,q) = d(r,q)$.
\end{lemma}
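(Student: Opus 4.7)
The plan is to exploit the symmetry of the configuration: since $p$ and $r$ are reflections of $q$ by an infinitesimal shift $e$ in opposite directions along the 2-simplex, and $\chi^2$ divergence depends on the component deviations only through their squares, the two divergences $D_{\chi^2}(p,q)$ and $D_{\chi^2}(r,q)$ will turn out to be numerically equal. Once this equality is established, both inequalities required by WLC become trivially true, so the axiom is satisfied no matter how $l_{p2q}$ and $l_{r2q}$ compare for any particular sample size $m$.

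Concretely, I would first plug $p=[q_1+e,\,q_2-e]$ and $r=[q_1-e,\,q_2+e]$ into the discrete definition
\begin{equation}
D_{\chi^2}(u,q)=\sum_{i}\frac{(u_i-q_i)^2}{q_i},
\end{equation}
and observe that in both cases every summand contributes $e^2/q_i$, yielding
\begin{equation}
D_{\chi^2}(p,q)=D_{\chi^2}(r,q)=\frac{e^2}{q_1}+\frac{e^2}{q_2}=\frac{e^2}{q_1 q_2}.
\end{equation}
This symmetry is precisely the structural reason $\chi^2$ divergence differs from KL divergence in this three-distribution setting: the log term in KLD is not symmetric under sign-reversal of the perturbation, which is exactly why KLD fails WLC whereas CSD does not.

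The second step is to read off the WLC consequence. Writing $d \equiv D_{\chi^2}$, the identity above gives both $d(p,q)\leq d(r,q)$ and $d(p,q)\geq d(r,q)$. Hence for every sample size $m$, either ordering of $l_{p2q}$ and $l_{r2q}$ (including equality, or the reversal phenomenon illustrated in Fig.~1(c) of the supplement) immediately yields the required ordering of the divergences. In particular, the edge case singled out at the end of Axiom A7 — "if both conditions hold simultaneously, then $d(p,q)=d(r,q)$" — is satisfied with equality rather than as a nontrivial constraint.

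I do not anticipate any substantial obstacle: the proof is essentially a one-line symmetry observation, and no asymptotic analysis of the multinomial likelihoods (as was needed for the strict LC lemma) is required here, precisely because the conclusion of WLC is a conditional that becomes vacuously tight once the two divergences coincide. The only bookkeeping point worth checking is that the hypotheses on $e$ (namely $0<e\leq\min(q_1,q_2)$) ensure all three distributions are valid probability vectors, so the formula for $D_{\chi^2}$ is well defined on nondegenerate supports.
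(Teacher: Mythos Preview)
Your proposal is correct and follows essentially the same route as the paper: both arguments compute $D_{\chi^2}(p,q)$ and $D_{\chi^2}(r,q)$ directly, obtain the common value $e^2/(q_1 q_2)$, and then observe that the equality of the two divergences makes both conditional inequalities in WLC hold trivially for every sample size $m$. Your use of the form $\sum_i (u_i-q_i)^2/q_i$ makes the sign-symmetry in $e$ visible at a glance, whereas the paper expands from $\sum_i u_i^2/q_i - 1$ and simplifies, but the substance is identical.
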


\begin{proof}
    According to the requirements in the definition of A7, it can be known that
    \begin{itemize}
        \item when $l_{p2q}\geq l_{r2q}$ (w.r.t. sample size $m$), then $d_{\chi^2}(p,q)\leq d_{\chi^2}(r,q)$.
        \item when $l_{p2q}\leq l_{r2q}$(w.r.t. sample size $m$), then $d_{\chi^2}(p,q)\geq d_{\chi^2}(r,q)$.
    \end{itemize}
     Clearly, it is necessary to ensure that under different sample sizes $m$, there exists \(d_{\chi^2}(p,q) =  d_{\chi^2}(r,q)\); then the $\chi^2$ distance can satisfy the requirements of WLC.
     
    When $p=[p_1=q_1+e,q_2-e]$, $ q=[q_1,q_2]$ and $r=[q_1-e,q_2+e]$, $d_{\chi^2}(p,q) $ and $ d_{\chi^2}(r,q)$ can be directly given by Eq.7 in the main text. 

    \begin{align*}
    {\rm{d}}_{\chi^2}(p,q) 
    &= \frac{(q_1 + e)^2}{q_1} + \frac{(q_2 - e)^2}{q_2} - 1 \\
    &= \frac{q_2(q_1^2 + 2q_1e + e^2) + q_1(q_2^2 - 2q_2e + e^2)}{q_1 q_2} - 1 \\
    &= \frac{(q_1 + q_2) (q_1 q_2 + e^2) }{q_1 q_2} - 1
    \end{align*}
    
    For any 0-1 distribution, $q_1+q_2=1$. Then:
    \begin{equation}
        {{\rm{d}}_{{\chi ^2}}}\left( {p,q} \right)=\frac{e^2}{q_1q_2}
    \end{equation}
    
    Similarly, we can obtain ${{\rm{d}}_{{\chi ^2}}}\left( {r,q} \right)$.  
    \begin{align*}
    {\rm{d}}_{\chi^2}(r,q) 
    &= \frac{(q_1 - e)^2}{q_1} + \frac{(q_2 + e)^2}{q_2} - 1 \\
    &= \frac{q_2(q_1^2 - 2q_1e + e^2) + q_1(q_2^2 + 2q_2e + e^2)}{q_1 q_2} - 1 \\
    &= \frac{q_1 q_2(q_1 + q_2) + e^2(q_1 + q_2)}{q_1 q_2} - 1 \\
    &= \frac{e^2}{q_1 q_2}
    \end{align*}
    Therefore,
    \begin{equation}
        d_{\chi^2}(p,q) = \frac{e^2}{q_1 q_2} = d_{\chi^2}(r,q)
    \end{equation}
    Since it has been proved in Lemma 1 that the $\chi^2$ divergence can satisfy the inequality relation in Axiom A2(LC). Combined with Eq.C2, the $\chi^2$ divergence can conforms to Axiom A7(WLC).
    \end{proof}

\end{document}